\documentclass[english, 11pt]{article}
\usepackage{graphicx}

\usepackage{amssymb}
\usepackage{amstext}
\usepackage{amsthm}
\usepackage{float}
\usepackage{url}
\usepackage[ruled,linesnumbered]{algorithm2e}
\usepackage[noend]{algpseudocode}
\usepackage{amsmath}
\usepackage[utf8]{inputenc}
\usepackage[english]{babel}
\usepackage[margin=1in]{geometry}
\usepackage{color}
\usepackage{hyperref}
\usepackage{thm-restate}
\usepackage{complexity}
\newcommand{\dat}{\textsf{dat}}
\newcommand{\lap}{\mathcal{L}}
\newcommand{\mlap}{\mathcal{L}}
\newcommand{\apxlap}{\widetilde{\lap}}
\newcommand{\glap}{\lap_G}
\newcommand{\gadj}{\mathcal{A}_G}
\newcommand{\gdiag}{\mathcal{D}_G}
\newcommand{\ginci}{\mathcal{B}_G}
\newcommand{\Range}{\textsf{Range}}

\newcommand{\hlap}{\lap_H}
\newcommand{\hadj}{\mathcal{A}_H}
\newcommand{\hdiag}{\mathcal{D}_H}

\newcommand{\hnorm}{\mathcal{N}_H}
\newcommand{\pseudo}{+}
\newcommand{\x}{\textnormal{\textbf{x}}}
\newcommand{\y}{\textnormal{\textbf{y}}}

\newcommand{\z}{\textnormal{\textbf{z}}}
\newcommand{\w}{\textnormal{\textbf{w}}}

\newcommand{\bv}{{\textnormal{\textbf{b}}}}
\newcommand{\var}[1]{\mathbf{Var}\left[#1\right]}
\renewcommand{\R}{\mathbb{R}}
\newcommand{\otilde}{\widetilde{O}}
\newcommand{\vones}{\textbf{1}}
\newcommand{\vzeros}{\textbf{0}}

\newtheorem{theorem}{Theorem}
\newtheorem{lemma}{Lemma}
\newtheorem{definition}{Definition}
\setlength{\parindent}{0em}
\setlength{\parskip}{1em}

\begin{document}

\date{}

\title{Efficient $\otilde(n/\epsilon)$ Spectral Sketches \\for the Laplacian and its Pseudoinverse}

\author{Arun Jambulapati\footnote{This material is based upon work supported by the National Science Foundation Graduate Research Fellowship under Grant No. DGE-114747.}
	\\ Stanford University \\ \texttt{jmblpati@stanford.edu}
\and
Aaron Sidford \\
 Stanford University \\ \texttt{sidford@stanford.edu}}

\maketitle

\begin{abstract}

In this paper we consider the problem of efficiently computing $\epsilon$-sketches for the Laplacian and its pseudoinverse. Given a Laplacian and an error tolerance $\epsilon$, we seek to construct a function $f$ such that for any vector $x$ (chosen obliviously from $f$), with high probability $(1-\epsilon) x^\top A x \leq f(x) \leq (1 + \epsilon) x^\top A x$ where $A$ is either the Laplacian or its pseudoinverse. Our goal is to construct such a sketch $f$ 
efficiently and to store it in the least space possible. 

We provide nearly-linear time algorithms that, when given a Laplacian matrix $\mathcal{L} \in \mathbb{R}^{n \times n}$ and an error tolerance $\epsilon$, produce $\tilde{O}(n/\epsilon)$-size sketches of both $\mlap$ and its pseudoinverse. Our algorithms improve upon the previous best sketch size of $\otilde(n / \epsilon^{1.6})$ for sketching the Laplacian form by \cite{quadforms} and $O(n / \epsilon^2)$ for sketching the Laplacian pseudoinverse by \cite{BSS}.

Furthermore we show how to compute all-pairs effective resistances from our $\otilde(n/\epsilon)$ size sketch in $\otilde(n^2/\epsilon)$ time. This improves upon the previous best running time of $\otilde(n^2/\epsilon^2)$ by \cite{SS}. 

\end{abstract}

\thispagestyle{empty}\newpage
\clearpage \setcounter{page}{1}

\newpage

\section{Introduction}
A Laplacian $\mlap \in \R^{n \times n}$ is a symmetric matrix, with non-positive off-diagonal entries, such that its diagonal entries are equal to the sum of its off-diagonal entries, i.e. $\mlap_{ij} = \mlap_{ji} \leq 0$ for all $i \neq j$ and $\mlap \vones = \vzeros$. An $\epsilon$-spectral sparsifier is a (hopefully sparse) Laplacian matrix $\apxlap \in \R^{n \times n}$ which preserves the quadratic form of $\mlap$ up to a multiplicative $(1\pm \epsilon)$ for all $\x\in \R^{n}$, that is 
\[
(1-\epsilon)\x^\top \lap \x \leq \x^\top \apxlap \x \leq (1+\epsilon) \x^\top \mlap \x ~.
\]
Since they were first proved to exist by Spielman and Teng in \cite{ST}, spectral sparsification has emerged as a powerful tool for designing fast graph algorithms and understanding the structure of undirected graphs \cite{Sherman13, KelnerLOS14, PengS14, AnariG15}. Spectral sparsifiers approximately preserve the eigenvalues of the original matrix, preserve all cuts in the graphs associated with the Laplacian matrices, serve as good preconditioners for solving Laplacian linear systems, and more. Furthermore, spectral sparsifiers preserve the quadratic form of the pseudoinverse and therefore preserve many natural measures of distance between vertices, like effective resistance distances and roundtrip commute times.

Given their numerous applications, obtaining faster algorithms for constructing sparser $\epsilon$-spectral sparsiers has been an incredibly active area of research over the past decade  \cite{SS,ST,BSS,ZhuLO15,LeeS15a,KoutisX16,LeeS17}. Recently, this work has culminated in the results of \cite{LeeS17} which showed how to construct $\epsilon$-spectral sparsifiers with $O(n/\epsilon^2)$-non-zero entries in nearly linear time. In addition, Andoni et al. \cite{quadforms} showed that for all $\epsilon > 1 / \sqrt{n}$, any $\epsilon$-spectral sparsifier (or even any data structure from which the quadratic form of any vector can be approximated) must have size at least $\Omega(n/\epsilon^2)$ thereby settling the size of spectral sparsifiers up to constant factors. 

A natural question remaining is whether relaxing the requirement of preserving $\x^\top \lap \x$ may allow for a smaller data structure. Instead of compressing a Laplacian so that we can approximate the quadratic for all vectors to $1 \pm \epsilon$-multiplicative accuracy we ask for the following weaker notion of approximately compressing a Laplacian, which is analagous to and inspired from the one in  \cite{quadforms}.\footnote{The definition in \cite{quadforms} is essentially the same as ours; the sole difference is that \cite{quadforms} allows the success probability to be a parameter.}
 
\begin{definition}[Spectral Sketch]
Given a symmetric PSD matrix $A \in \R^{n \times n}$ we say an algorithm produces an $s$-sparse $\epsilon$-spectral sketch of $A$ if it produces a function $f : \R^n \rightarrow \R$ such that $f$ can be stored and evaluated using only $O(s)$-space and such that for any particular $\x$ a query to $f$, i.e. $f(x)$, satisfies the following with probability at least $2/3$: 
\begin{equation}
\label{eq:spec_sketch_approx}
(1 - \epsilon) \x^\top A \x
\leq
f(\x) \leq (1 + \epsilon) \x^\top A \x ~.
\end{equation}
\end{definition}

Note that a spectral sketch of a Laplacian $\mlap$ can always be improved to have \eqref{eq:spec_sketch_approx} hold with probability $p$ simply by computing the sketch $\log(1/p)$ times independently of each other and outputting the median result. Consequently, an $s$-sparse $\epsilon$-spectral sketch of the Laplacian implies a $O(s \log n)$-sparse $\epsilon$-spectral sketch of the Laplacian where \eqref{eq:spec_sketch_approx} instead holds with high probability in $n$.

With this weaker notion of Laplacian approximation we can again ask the following question: how sparse can we make our sketch and how fast can we compute it? 

This notion of preserving the quadratic form was introduced and studied recently in \cite{quadforms} where they showed how to construct $\otilde(n/\epsilon^{1.6})$-sparse\footnote{Here and throughout the paper we use $\otilde(\cdot)$ notation to hide polylogarithmic factors.} $\epsilon$-spectral sketches, a much improved dependence on $\epsilon$. Furthermore, they showed that if queries $x$ to $f$ are restricted to be in $\{0,1\}^n$, i.e. cut queries, then $\otilde (n/\epsilon)$-sparse $\epsilon$-spectral sketches can be constructed, which is tight up to polylogarithmic factors. 


In this paper we show how to build upon the results in \cite{quadforms} to provide a nearly linear time algorithm to produce a general $\otilde(n/\epsilon)$-sparse $\epsilon$-spectral sketch of $\mlap$ with query time $\otilde(n/\epsilon)$. This improves on the sketch size to nearly the best size possible in light of the aforementioned lower bound in the restricted case of cut queries. Formally, in Section~\ref{sec:lap_sketch} we prove the following:

\begin{restatable}[Sketching the Laplacian]{theorem}{main}
\label{thm:main} For any polynomially-bounded weighted graph $G$ with Laplacian $\glap$, an $\otilde(n/\epsilon)$-sparse $\epsilon$-spectral sketch of $\glap$  can be constructed in $\otilde(m)$ time such that the sketch has query time $\otilde(n/\epsilon)$. 
\end{restatable}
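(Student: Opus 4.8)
The plan is to combine standard nearly-linear-time Laplacian machinery with an unbiased importance-sampling estimator whose per-query error is controlled by Chebyshev, and then to reduce that estimator's variance via a dyadic decomposition in the spirit of \cite{quadforms}.

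\emph{Reductions.} First I would reduce to a convenient form. Using a nearly-linear-time spectral sparsification routine (e.g.\ \cite{LeeS17}) we may assume $G$ has $\otilde(n)$ edges, and using a fast SDD linear-system solver together with Johnson--Lindenstrauss sketching of $\ginci\glap^{\pseudo}$ (as in \cite{SS}) we compute, in $\otilde(m)$ time, a constant-factor estimate $\widetilde\tau_e$ of each leverage score $\tau_e := w_e \cdot b_e^\top\glap^{\pseudo}b_e$; recall $\sum_e \tau_e = n-1$. Throughout we use only the elementary facts that $\x^\top\glap\x = \sum_e a_e(\x)$ with $a_e(\x) := w_e(\x_u-\x_v)^2 \ge 0$, and the pointwise bound $a_e(\x) \le \tau_e \cdot \x^\top\glap\x$ (Cauchy--Schwarz in the $\glap$-inner product).

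\emph{The skeleton estimator and the obstruction.} Sample each edge independently with probability $p_e := \min\{1,\, c\log(n)\,\widetilde\tau_e/\epsilon\}$, store the sampled edges with weights $w_e/p_e$, and let $f_0(\x)$ be the resulting quadratic form. Then $f_0$ is unbiased, uses $\sum_e p_e = \otilde(n/\epsilon)$ stored edges, and a short calculation using $a_e(\x) \le \tau_e \x^\top\glap\x$ gives $\var{f_0(\x)} = \sum_e a_e(\x)^2(p_e^{-1}-1) = O(\epsilon/\log n)\cdot(\x^\top\glap\x)^2$. This is a factor $\Theta(1/\epsilon)$ too large: Chebyshev only certifies a $(1\pm\sqrt\epsilon)$-approximation at this sample size. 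The naive remedy of oversampling by $1/\epsilon$ costs $\otilde(n/\epsilon^2)$ edges, and the naive recursive remedy of re-sketching the residual $\glap - \widetilde\glap$ fails because of \emph{cancellation}: although the residual has small quadratic form, its edges still carry total leverage $\Theta(n)$, so the variance of any importance sample of it is again $\Theta(\epsilon)(\x^\top\glap\x)^2$. Defeating this barrier is the crux.

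\emph{Variance reduction and running time.} Following \cite{quadforms} I would decompose the edge set by leverage-score scale, $C_k := \{e : \tau_e \in (2^{-k}, 2^{-k+1}]\}$ for $k = 1, \dots, O(\log(n/\epsilon))$, so that $|C_k| \le 2^k n$ and leverages within $C_k$ agree up to a factor $2$. The $O(\log(1/\epsilon))$ \emph{heavy} scales ($2^{-k} \gtrsim \epsilon$, hence $\lesssim n/\epsilon$ edges in total) are stored essentially exactly and contribute no variance. For each \emph{light} scale $C_k$ I would combine the importance sample of $C_k$ — which by itself controls the variance contributed by any single edge of $C_k$, since for those edges $a_e(\x) \le 2^{-k+1}\x^\top\glap\x$ is small — with an auxiliary, cancellation-free structure: a small linear sketch of the incidence matrix $\ginci$ restricted to $C_k$, equivalently a count-sketch of the vector $(\sqrt{w_e}(\x_u-\x_v))_{e \in C_k}$, whose role is to capture the \emph{diffuse} part of $\x^\top\glap_{C_k}\x$ without paying edge-by-edge leverage. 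Balancing, per scale, the number of sampled edges and the number of sketch rows against a variance budget of $\epsilon^2(\x^\top\glap\x)^2/\mathrm{polylog}(n)$, and summing the independent contributions over all $O(\log(n/\epsilon))$ scales, I expect total storage $\otilde(n/\epsilon)$ and total variance $\le \tfrac13\epsilon^2(\x^\top\glap\x)^2$, so Chebyshev yields \eqref{eq:spec_sketch_approx} per query with probability $2/3$ (the median-of-$O(\log n)$-copies trick noted after the definition of a spectral sketch boosts this to high probability if desired). Every ingredient — the initial sparsification, the leverage-score estimates, the independent edge samples, and each per-scale linear sketch — is computable with standard nearly-linear-time primitives, and there are only $O(\log(n/\epsilon))$ scales, so the construction runs in $\otilde(m)$ time; the stored object is a weighted graph on $n$ vertices with $\otilde(n/\epsilon)$ edges plus $O(\log(n/\epsilon))$ linear sketches of total size $\otilde(n/\epsilon)$, so evaluating $f(\x)$ is a handful of sparse quadratic-form and matrix--vector evaluations, i.e.\ $\otilde(n/\epsilon)$ time.

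\emph{Main obstacle.} The hard part is the variance-reduction step: making precise how, for a light leverage class, a cheap linear sketch supplies exactly the portion of the quadratic form that importance sampling estimates badly, and verifying that when the per-scale sampling and sketching budgets are composed across all $O(\log(n/\epsilon))$ scales, neither the storage nor the variance suffers a geometric blow-up. This is precisely where the improvement from the $\otilde(n/\epsilon^{1.6})$ sketch of \cite{quadforms} to $\otilde(n/\epsilon)$ must come from.
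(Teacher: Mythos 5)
Your proposal correctly isolates the central obstruction---leverage-score importance sampling with $\otilde(n/\epsilon)$ edges has variance $\Theta(\epsilon)(\x^\top\glap\x)^2$, i.e.\ only a $(1\pm\sqrt{\epsilon})$ guarantee---but the step meant to overcome it is exactly the step you leave unproven, and the mechanism you gesture at does not appear to fit the space budget. A count-sketch/AMS estimate of $\|(\sqrt{w_e}(\x_u-\x_v))_{e\in C_k}\|_2^2$ must be stored as a linear map applied to $\ginci$ restricted to $C_k$; driving its variance down to $\epsilon^2(\x^\top\glap\x)^2/\mathrm{polylog}(n)$ requires $\Omega(1/\epsilon^2)$ sketch rows, and each stored row is a random signed combination of incidence rows, hence generically a dense vector in $\R^n$. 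That is $\Theta(n/\epsilon^2)$ storage---precisely the Johnson--Lindenstrauss bound the theorem is supposed to beat. (Also, the dyadic leverage-score decomposition is not what \cite{quadforms} does; they decompose by degree and conductance.) As written, the proposal reduces the theorem to an unresolved balancing problem rather than proving it.

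The paper's route supplies the missing ingredient: an estimator whose variance decays \emph{quadratically} in the per-unit storage budget. It bit-buckets the weights, edge-partitions each unweighted layer into $O(\log n)$ levels whose connected components each lie inside an expander of conductance $h=\Omega(1/\log^2 n)$ (Lemma~\ref{lemma:part}), and within each component stores the degrees and all edges incident to vertices of degree at most $\alpha$ exactly, while sampling $\alpha$ edges from the neighborhood of each high-degree vertex. After recentering the query as $\y=\x-c\vones$ with $c$ minimizing $\|\hdiag^{1/2}\y\|_2$, the variance is at most $\alpha^{-2}\|\hdiag^{1/2}\y\|_2^4$---note $\alpha^{-2}$, not $\alpha^{-1}$, which is the source of the $1/\epsilon$ rather than $1/\epsilon^2$ dependence---and Cheeger's inequality on the containing expander $T$ gives $\|\hdiag^{1/2}\y\|_2^2 \le (2/h^2)\,\z^\top\lap_T\z$. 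Taking $\alpha=\otilde(1/\epsilon)$ and summing over the $O(\log^2 n)$ levels closes the Chebyshev argument. To salvage your framing you would need to exhibit a concrete estimator for each light leverage class with this quadratic variance decay in its storage; the degree-based sampling on near-expander pieces is how the paper manufactures exactly that.
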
 

We also consider the problem of sketching the pseudoinverse of the Laplacian and show that we can produce an $\otilde(n/\epsilon)$-sparse $\epsilon$-spectral sketch of this matrix. In Section~\ref{sec:pseudo} we prove the following:

\begin{restatable}[Sketching the Pseudoinverse]{theorem}{pinv}
\label{thm:pinv} For any polynomially-bounded weighted graph $G$ with Laplacian $\glap$, it is possible to construct an $\otilde(n/\epsilon)$-sparse $\epsilon$-spectral sketch of $\glap^{\pseudo}$ in $\otilde(m)$ time such that the sketch has query time $\otilde(n/\epsilon)$.
\end{restatable}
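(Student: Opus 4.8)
The plan is to reduce the problem to Theorem~\ref{thm:main} (a Laplacian sketch) together with a single Laplacian solve against a \emph{coarse} sparsifier, exploiting the fact that the natural variational characterization of $\x^\top\glap^{\pseudo}\x$ is stationary at its optimum. Assume $G$ is connected (otherwise work component by component) and let $\Pi$ be the orthogonal projection onto $\mathrm{range}(\glap)=\vones^\perp$. For $\y\in\vones^\perp$ set $\Phi(\y):=2\x^\top\y-\y^\top\glap\y$; this concave quadratic is maximized at $\y^\star:=\glap^{\pseudo}\x$ with $\Phi(\y^\star)=\x^\top\glap^{\pseudo}\x$, and for every $\y$ one has the exact identity $\Phi(\y^\star)-\Phi(\y)=(\y-\y^\star)^\top\glap(\y-\y^\star)$. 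Hence if $\widehat{\y}$ is close to $\y^\star$ in the $\glap$-seminorm, $\Phi(\widehat{\y})$ already matches $\x^\top\glap^{\pseudo}\x$ to within the \emph{square} of that closeness; this is exactly what will let me replace the usual $\epsilon$-accurate sparsifier by a $\sqrt{\epsilon}$-accurate one.

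\textbf{Construction.} I would, in $\otilde(m)$ time, build (i) a $\delta$-spectral sparsifier $H$ of $G$ with $\delta:=c\sqrt{\epsilon}$ for a small constant $c$ (e.g.\ via \cite{LeeS17} or effective-resistance sampling with a Laplacian solver), which has $\otilde(n/\delta^2)=\otilde(n/\epsilon)$ edges and the same nullspace as $\glap$; and (ii) independently, via Theorem~\ref{thm:main}, an $\otilde(n/\epsilon)$-sparse $\epsilon$-spectral sketch $f$ of $\glap$, boosted to failure probability $n^{-10}$ at the cost of a $\log n$ factor. Store $H$, $f$, and (when $G$ is disconnected) the component structure; the space is $\otilde(n/\epsilon)$. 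On a query $\x$, compute $\bv:=\Pi\x$, solve $\hlap\widehat{\y}=\bv$ to relative accuracy $n^{-10}$ with a nearly-linear-time Laplacian solver on $H$ (time $\otilde(\nnz(\hlap))=\otilde(n/\epsilon)$), and output $g(\x):=2\x^\top\widehat{\y}-f(\widehat{\y})$, where evaluating $f(\widehat{\y})$ costs $\otilde(n/\epsilon)$. Since $\widehat{\y}$ is a function only of $\x$ and the randomness used to build $H$, it is oblivious to the randomness of $f$, so the guarantee of $f$ legitimately applies to the query $\widehat{\y}$.

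\textbf{Analysis.} On the common range $\widehat{\y}-\y^\star=\hlap^{\pseudo}(\glap-\hlap)\y^\star$, so combining $-\delta\glap\preceq\glap-\hlap\preceq\delta\glap$ with $\hlap^{\pseudo}\glap\hlap^{\pseudo}\preceq(1-\delta)^{-2}\glap^{\pseudo}$ gives $(\widehat{\y}-\y^\star)^\top\glap(\widehat{\y}-\y^\star)\le(1-\delta)^{-2}\delta^2\,\x^\top\glap^{\pseudo}\x$, hence $\Phi(\widehat{\y})=(1\pm O(\delta^2))\x^\top\glap^{\pseudo}\x=(1\pm\epsilon/10)\x^\top\glap^{\pseudo}\x$ for $c$ small. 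Likewise $\widehat{\y}^\top\glap\widehat{\y}=\x^\top\hlap^{\pseudo}\glap\hlap^{\pseudo}\x\le(1-\delta)^{-2}\x^\top\glap^{\pseudo}\x=(1+O(\sqrt{\epsilon}))\x^\top\glap^{\pseudo}\x$. With probability at least $2/3$ (indeed $1-n^{-10}$), $|f(\widehat{\y})-\widehat{\y}^\top\glap\widehat{\y}|\le\epsilon\,\widehat{\y}^\top\glap\widehat{\y}$, so $|g(\x)-\Phi(\widehat{\y})|\le\epsilon(1+O(\sqrt{\epsilon}))\x^\top\glap^{\pseudo}\x$. Adding the two estimates yields $|g(\x)-\x^\top\glap^{\pseudo}\x|\le O(\epsilon)\,\x^\top\glap^{\pseudo}\x$, and rescaling $\epsilon$ by a constant proves the theorem; the $n^{-10}$-accurate solve perturbs each quantity by a $1\pm n^{-\Omega(1)}$ factor since $G$ is polynomially bounded, which is absorbed into the constants.

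\textbf{Main obstacle.} The hardest part will be the accuracy/size bookkeeping in this reduction: one must see that the energy formulation $\Phi$ decouples the two error sources — the quality of the \emph{minimizer} $\widehat{\y}$, which only needs relative $\sqrt{\epsilon}$ accuracy in the $\glap$-norm (so a $\sqrt{\epsilon}$-sparsifier of size $\otilde(n/\epsilon)$ suffices), and the quality of the single scalar evaluation $\widehat{\y}^\top\glap\widehat{\y}$, which needs relative accuracy $\epsilon$ and is supplied exactly by the Laplacian sketch of Theorem~\ref{thm:main}. The remaining care points are verifying that feeding the $H$-dependent vector $\widehat{\y}$ into $f$ respects obliviousness, and that the solver's additive error (and the handling of $\mathrm{range}(\glap)$ for disconnected $G$) is harmless — both routine given polynomial boundedness.
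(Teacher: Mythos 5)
Your proposal is correct and is essentially the paper's own argument: the paper also maximizes the concave quadratic $q_{\glap,\bv}(\z)=2\bv^\top\z-\z^\top\glap\z$, uses the identity $q(\y^\star)-q(\y)=(\y-\y^\star)^\top\glap(\y-\y^\star)$ to show a $\sqrt{\epsilon}$-accurate sparsifier solve loses only $O(\epsilon)$ relative error (Lemmas~\ref{lemma:qfacts} and~\ref{lemma:approxfact}, Theorem~\ref{thm:general}), and then substitutes the Theorem~\ref{thm:main} sketch for the scalar $\y^\top\glap\y$. The only cosmetic difference is that the paper realizes the approximate solve as an explicit symmetric linear operator $S$ (a truncated preconditioned iteration, Lemma~\ref{lemma:solver}) rather than a black-box solver with $n^{-10}$ residual error, which avoids having to absorb solver error into the constants.
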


This result is of intrinsic interest as it is the first sketch that improves upon the direct bounds for this problem achievable through Johnson-Lindenstrauss \cite{SS}, which yields $\otilde(n/\epsilon^2)$-sparse $\epsilon$-sketch sparsifiers for $\glap^\pseudo$ in $\otilde(m)$ time. This result is of intrinsic interest as the quadratic form of the pseudoinverse yields important graph quantities like effective resistances, which have seen many algorithmic uses over the years, from graph sparsification \cite{SS}, to random spanning tree sampling \cite{DurfeeKPRS17, KelnerM09, DingLP11}, to approximate maximum flow \cite{ChristianoKMST11}, among many others. 

As an interesting application of our sketch in time $\otilde (n^2/\epsilon)$ we show how to approximate the effective resistance between every pair of vertices in a graph. In Section~\ref{sec:all_pairs} we prove the following:

\begin{restatable}[All-Pairs Effective Resistances]{theorem}{aper}
\label{thm:aper} For any polynomially-bounded weighted graph $G$ with Laplacian $\glap$, we can generate a data structure in $\otilde(m)$ time that, with high probability, generates $1 \pm \epsilon$ approximations to the effective resistances between every pair of vertices in $\otilde(n^2/\epsilon)$ additional time. In addition, this data structure requires $\otilde(n/\epsilon)$ space to store.
\end{restatable}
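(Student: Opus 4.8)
The plan is to extract the effective resistances directly from the pseudoinverse sketch of Theorem~\ref{thm:pinv}. Recall that the effective resistance between vertices $u$ and $v$ equals $\x_{uv}^\top \glap^{\pseudo} \x_{uv}$ where $\x_{uv} = e_u - e_v$, and that these $\binom{n}{2}$ query vectors are fixed by the vertex set alone, hence oblivious to the randomness used to build any sketch. The key observation I would use is that the $\otilde(n/\epsilon)$-sparse pseudoinverse sketch constructed in Section~\ref{sec:pseudo} is not an arbitrary function but a quadratic form: it stores an explicit matrix $Z \in \R^{k \times n}$ with $k = \otilde(1/\epsilon)$ rows and reports $f(\x) = \|Z\x\|_2^2$ (indeed this is what makes its generic query time $\otilde(n/\epsilon) = \otilde(nk)$). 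I would first build such a sketch in $\otilde(m)$ time via Theorem~\ref{thm:pinv}, and to make its guarantee hold with probability $1 - n^{-c}$ for a fixed query, I would take the median of $t = O(\log n)$ independent copies $Z^{(1)},\dots,Z^{(t)}$; the resulting data structure is exactly this collection of matrices, of total size $t \cdot kn = \otilde(n/\epsilon)$.

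Given the data structure, the all-pairs routine is immediate: for each pair $(u,v)$ and each copy $\ell$, note that $Z^{(\ell)}\x_{uv} = Z^{(\ell)} e_u - Z^{(\ell)} e_v$ is just the difference of two columns of $Z^{(\ell)}$, so $\|Z^{(\ell)}\x_{uv}\|_2^2$ can be evaluated in $O(k) = \otilde(1/\epsilon)$ time since the columns of $Z^{(\ell)}$ are already materialized. I would output the median of these $t$ values as the estimate for the effective resistance of $(u,v)$. The cost per pair is $O(kt) = \otilde(1/\epsilon)$, for a total of $\otilde(n^2/\epsilon)$ over all $\binom{n}{2}$ pairs, matching the claimed bound, and no Laplacian solves or generic $\otilde(n/\epsilon)$-time queries are needed in this phase.

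For correctness, fix a pair $(u,v)$. Each copy satisfies $(1-\epsilon)\x_{uv}^\top\glap^{\pseudo}\x_{uv} \le \|Z^{(\ell)}\x_{uv}\|_2^2 \le (1+\epsilon)\x_{uv}^\top\glap^{\pseudo}\x_{uv}$ with probability at least $2/3$, so a Chernoff bound on the median over the $t = O(\log n)$ copies makes this hold with probability $1 - n^{-c}$; since the query $\x_{uv}$ does not depend on the sketch randomness, a union bound over all $\binom{n}{2}$ pairs shows that with high probability every estimate is a $1\pm\epsilon$ approximation of the corresponding effective resistance. Combining this with the $\otilde(m)$ construction time of Theorem~\ref{thm:pinv}, the $\otilde(n/\epsilon)$ space bound, and the $\otilde(n^2/\epsilon)$ query time above completes the argument.

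I expect the only real subtlety, and the step to get right, is the first one: confirming that the pseudoinverse sketch of Theorem~\ref{thm:pinv} genuinely has the factored form $\|Z\x\|_2^2$ with $Z$ explicitly stored, so that a query at the structured vector $e_u - e_v$ collapses to $\otilde(1/\epsilon)$-time column arithmetic rather than incurring the generic $\otilde(n/\epsilon)$ query cost; this is precisely what separates $\otilde(n^2/\epsilon)$ from the naive $\otilde(n^3/\epsilon)$. Everything else, namely the median-of-logarithmically-many boosting, the oblivious union bound over the fixed family $\{e_u - e_v\}$, and the bookkeeping of space and time, is routine.
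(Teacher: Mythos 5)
There is a genuine gap, and it is exactly at the step you yourself flag as the one ``to get right.'' Your plan rests on the claim that the pseudoinverse sketch of Theorem~\ref{thm:pinv} is stored as an explicit matrix $Z \in \R^{k\times n}$ with $k = \otilde(1/\epsilon)$ rows and evaluated as $\|Z\x\|_2^2$. That is not how the sketch of Section~\ref{sec:pseudo} is built, and no such factorization is available. The sketch consists of two pieces: an implicit linear operator $S$ (a degree-$\otilde(1)$ polynomial in $\glap' N$ applied to $N$, built from a sparsifier and a Laplacian solver, Lemma~\ref{lemma:solver}) and the Laplacian sketch $f$ of Theorem~\ref{thm:main}; a query $\bv$ is answered by computing $\y = S\bv$ and returning $2\bv^\top\y - f(\y)$. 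Applying $S$ is an $\otilde(n/\epsilon)$-time solve per query, which is precisely why the naive all-pairs cost is $\otilde(n^3/\epsilon)$. A factored form $\|Z\x\|_2^2$ with only $\otilde(1/\epsilon)$ rows would essentially be a Johnson--Lindenstrauss--type embedding with a $1/\epsilon$ (rather than $1/\epsilon^2$) dependence; the JL route is exactly the prior $\otilde(n/\epsilon^2)$ result of \cite{SS} that the paper is improving on, so assuming such a $Z$ presupposes the improvement rather than deriving it. (Also note that $f$ itself is a signed random estimator --- degree terms minus sampled edge terms --- so even its quadratic-form matrix need not be PSD, ruling out a Gram factorization.)

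The paper closes this gap differently: it amortizes the expensive solves rather than eliminating them. Lemma~\ref{lemma:matrixize} shows $f$ can be ``matrix-ized'' into an explicit matrix $M$ with $f(\x) = \x^\top M \x$ in $\otilde(n^2)$ time, and then the dense $n\times n$ matrix $Q = 2S^\top - S^\top M S$ is formed explicitly by applying $S$ to $n$ vectors at $\otilde(n/\epsilon)$ each, for $\otilde(n^2/\epsilon)$ total. After this one-time cost, each effective-resistance query $(e_u - e_v)^\top Q (e_u - e_v)$ reads off four entries of $Q$ in $O(1)$ time. Your surrounding scaffolding --- obliviousness of the queries $e_u - e_v$, median-of-$O(\log n)$ boosting, the union bound over $\binom{n}{2}$ pairs, and the $\otilde(n/\epsilon)$ space accounting for the stored (pre-matrix-ization) data structure --- is all correct and matches the paper, but the central mechanism that gets you from $\otilde(n^3/\epsilon)$ down to $\otilde(n^2/\epsilon)$ is missing and the one you propose in its place does not exist.
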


This result improves upon the previous best running time of $\otilde(n^2/\epsilon^2)$ by Johnson-Lindenstrauss \cite{SS}, the running times of $O(n^\omega)$ for $\omega < 2.373$ \cite{LeGall14} achievable through fast matrix multiplication, and the running time of $\otilde(nm)$ achievable using nearly linear time Laplacian solvers  \cite{ST, CohenKPPR14} whenever the graph is sufficiently dense and $\epsilon$ is not too small. Moreover, this result improves upon the previous best space bound of $\otilde(n/\epsilon^2)$ by \cite{SS} for approximately storing all-pairs effective resistances. 

The rest of the paper is structured as follows. First, in Section~\ref{sec:prelim} we cover notation and preliminary facts we use throughout the paper. Then, in Section~\ref{sec:approach} we provide an overview of previous approaches and our approach to proving our main results. In Sections \ref{sec:lap_sketch}, \ref{sec:pseudo}, and \ref{sec:all_pairs} we prove our main results, Theorem~\ref{thm:main}, Theorem~\ref{thm:pinv}, and Theorem~\ref{thm:aper} respectively.
 
\newpage
\section{Preliminaries}
\label{sec:prelim}

Throughout this paper we typically use $G$ to denote a  integer-weighted, undirected graph, with $n$ nodes, $m$ edges, and polynomially-bounded weights. For graph $G$ and vertex $u$, we let $\delta_u(G)$ be the \textit{weighted} degree of $u$: the sum of the weights of all edges incident upon $u$ in $G$. When the graph in question is clear, we  drop the $G$. We will also at times refer to \textit{unweighted} graphs, where all edge weights are assumed to be $1$. In this case we note that the weighted degree and the standard definition of degree are identical. Given $G = (V,E)$ and $S \subseteq V$ we define the subgraph of $G$ induced on $S$ as $H = (S,E')$, where $E'$ consists of the edges in $G$ with both endpoints in $S$. We call $H$ an induced subgraph of $G$ if there exists an $S$ where $G$ induced on $S$ is isomorphic to $H$.

For any weighted graph $H = (V, E, w)$ with non-negative integer edge weights $w \in \mathbb{Z}_{\geq 0}^{E}$ we let $\hlap \in \R^{n \times n}$ denote the \emph{Laplacian matrix} associated with $H$, i.e. for all $a,b \in V$ we have
\[
\mlap_H(a,b) = 
\begin{cases}
\delta_a & \text{if } a = b \\
-w_{\{a,b\}} & \text{if } \{a,b\} \in E \\
0 & \text{otherwise }
\end{cases}
~.
\]
We also let $\hdiag$ and $\hadj$ denote the degree and adjacency matrices of $H$, respectively, i.e. 
\[
\hdiag(a,b) = 
\begin{cases}
\delta_a & \text{if } a = b \\
0 & \text{otherwise }
\end{cases}
~
\text{ and }
~
\hadj(a,b) = 
\begin{cases}
w_{\{a,b\}} & \text{if } \{a,b\} \in E \\
0 & \text{otherwise }
\end{cases} ~.
\]
Note that $\hlap = \hdiag- \hadj$. 

For a symmetric matrix $A$ we use $A^{\pseudo}$ to denote the Moore–Penrose pseudoinverse for $A$ and for symmetric matrix $B$ we use $A \preceq B$ to denote the condition that $\x^\top A \x \leq \x^\top B \x$ for all $\x$; we define $\succeq$ analogously. We use $\vzeros$ and $\vones$ to denote the all zero and the all ones vector respectively when  the dimensions are clear from context. We use $\lambda_i(A)$ to denote the $i^{th}$ smallest eigenvalue of $A$.

Let $\hnorm  = \hdiag^{-1/2}\hlap\hdiag^{-1/2}$ be the normalized Laplacian of $H$.  The conductance of $H$ is
\[
\Phi_{H} = \min\limits_{S \subseteq V , S \notin \{\emptyset,V\}} \quad \frac{w(S, V - S)}{\min\{ \textsf{Vol}(S), \textsf{Vol}(V-S) \}}.
\]
where $w(S, V - S)$ is the sum of the weights of all edges with one endpoint in $S$ and on in $V - S$ and the volume $\textsf{Vol}(S)$ refers to the sum of the degrees of the vertices in $S$.

We use the discrete version of the following famous result relating conductance and eigenvalues:

\begin{lemma}[Cheeger's Inequality \cite{cheeger}]
\label{lem:cheeger}	
Let $H = (V,E,w)$ be an undirected weighted graph. Then 
\[
\lambda_2(\hnorm) \geq \frac{\Phi_{H}^2}{2}.
\]
\end{lemma}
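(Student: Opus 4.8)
The plan is to prove the ``hard direction'' of Cheeger's inequality by the standard spectral-rounding argument: take an eigenvector attaining $\lambda_2(\hnorm)$, convert it into a non-negative test vector, and round that vector into a threshold (``sweep'') cut whose conductance is controlled by the corresponding Rayleigh quotient. Throughout I write $R(z) = \frac{\sum_{\{a,b\}\in E} w_{\{a,b\}}(z_a - z_b)^2}{\sum_a \delta_a z_a^2}$ for the normalized Rayleigh quotient of a non-constant $z \in \R^n$.

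First I would invoke the variational characterization of $\lambda_2$. Since $\hdiag^{1/2}\vones$ spans the kernel of $\hnorm$, substituting $g = \hdiag^{1/2}x$ gives $\lambda_2(\hnorm) = \min\{ R(x) : \sum_a \delta_a x_a = 0,\ x \text{ non-constant}\}$. Let $x$ attain this minimum. Replacing $x$ by $x - c\vones$ leaves the numerator of $R$ unchanged, and since $x$ is $\hdiag$-orthogonal to $\vones$ we have $\sum_a \delta_a (x_a - c)^2 \ge \sum_a \delta_a x_a^2$, so such a shift never increases $R$. Choosing $c$ to be a degree-weighted median value of $x$, the positive and negative parts $y^+ = \max(x,0)$ and $y^- = \max(-x,0)$ of the shifted vector each have support of volume at most $\textsf{Vol}(V)/2$. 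Using that $y^+$ and $y^-$ have disjoint supports, that $\sum_a \delta_a x_a^2 = \sum_a \delta_a (y^+_a)^2 + \sum_a \delta_a(y^-_a)^2$, and that on a cross edge $(y^+_a + y^-_b)^2 \ge (y^+_a)^2 + (y^-_b)^2$, a short computation shows $R(x) \ge \min\{R(y^+), R(y^-)\}$. Hence one of $y^+, y^-$ — call it $y$ — is a non-negative vector with $\textsf{Vol}(\mathrm{supp}(y)) \le \textsf{Vol}(V)/2$ and $R(y) \le \lambda_2(\hnorm)$.

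The core step is the rounding lemma for such a $y$. I would pick a threshold $t$ uniformly at random from $[0, \max_a y_a^2]$ and set $S_t = \{a : y_a^2 > t\}$; then $\mathbb{E}[\textsf{Vol}(S_t)] = \frac{1}{\max_a y_a^2}\sum_a \delta_a y_a^2$ and $\mathbb{E}[w(S_t, V\setminus S_t)] = \frac{1}{\max_a y_a^2}\sum_{\{a,b\}\in E} w_{\{a,b\}}|y_a^2 - y_b^2|$. Writing $|y_a^2 - y_b^2| = |y_a - y_b|\cdot|y_a + y_b|$ and applying Cauchy--Schwarz, then $(y_a + y_b)^2 \le 2(y_a^2 + y_b^2)$ and $\sum_{\{a,b\}\in E} w_{\{a,b\}}(y_a^2 + y_b^2) = \sum_a \delta_a y_a^2$, gives
\[
\frac{\mathbb{E}[w(S_t, V\setminus S_t)]}{\mathbb{E}[\textsf{Vol}(S_t)]} \;\le\; \sqrt{2\,R(y)} \;\le\; \sqrt{2\,\lambda_2(\hnorm)}\,.
\]
Consequently there is a choice of $t$ (valid, i.e.\ with $S_t$ a nonempty proper subset, for almost every $t$) such that $w(S_t, V\setminus S_t) \le \sqrt{2\lambda_2(\hnorm)}\cdot \textsf{Vol}(S_t)$; since $S_t \subseteq \mathrm{supp}(y)$ has volume at most $\textsf{Vol}(V)/2$, we conclude $\Phi_H \le \frac{w(S_t, V\setminus S_t)}{\textsf{Vol}(S_t)} \le \sqrt{2\lambda_2(\hnorm)}$, and squaring yields $\lambda_2(\hnorm) \ge \Phi_H^2/2$.

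I expect the main obstacle to be the rounding lemma — in particular tuning the Cauchy--Schwarz step so the constant comes out exactly at $2$ (which is what produces $\Phi_H^2/2$ rather than a weaker bound), and the bookkeeping around degenerate cases such as one of $y^+, y^-$ being identically zero, ties at the weighted median, or an edgeless graph. The variational setup, the shift/split reduction, and the ``ratio of expectations implies a good pointwise realization'' argument are routine once the rounding lemma is in hand.
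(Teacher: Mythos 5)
The paper does not prove this lemma at all: it is imported as a black-box citation to the classical discrete Cheeger inequality, so there is no in-paper argument to compare against. Your proposal is the standard (and correct) proof of the hard direction. The chain of steps is sound: the variational characterization $\lambda_2(\hnorm)=\min\{R(x):\sum_a\delta_a x_a=0\}$ after the substitution $g=\hdiag^{1/2}x$; the observation that shifting by $c\vones$ fixes the numerator and, by the $\hdiag$-orthogonality to $\vones$, can only enlarge the denominator; the mediant inequality $R(x)\geq\min\{R(y^+),R(y^-)\}$ using $(y^+_a+y^-_b)^2\geq (y^+_a)^2+(y^-_b)^2$ on cross edges; and the sweep-cut rounding, where Cauchy--Schwarz together with $(y_a+y_b)^2\leq 2(y_a^2+y_b^2)$ and $\sum_{\{a,b\}\in E}w_{\{a,b\}}(y_a^2+y_b^2)=\sum_a\delta_a y_a^2$ gives exactly the constant $\sqrt{2}$, hence $\Phi_H\leq\sqrt{2\lambda_2(\hnorm)}$ and the claimed $\lambda_2(\hnorm)\geq\Phi_H^2/2$. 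The degenerate cases you flag (disconnected $H$, where both sides are $0$; one of $y^{\pm}$ vanishing, in which case the split step is vacuous; realizing the ratio-of-expectations bound pointwise for some valid nonempty $S_t$) are genuinely the only loose ends and are routine. In short: correct, complete in outline, and necessarily a different ``route'' only in the trivial sense that the paper supplies no proof of its own.
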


Note that $\lambda_1(\hnorm) = \lambda_1(\hlap) = 0$, and $\lambda_2(\hlap)$ is nonzero if and only if $H$ is connected.

Finally, we use $\dat$ to denote the data used to construct a sketch. This data is a collection of objects of arbitrary types. We use the notation $a \in \dat$ to refer to objects $a$ being included in $\dat$ when the type of the objects are clear from context. 

Throughout the paper we use $\otilde(\cdot)$ to hide polylogarithmic factors. 

\section{Approach}
\label{sec:approach}

Here we summarize previous approaches to the problems of sketching quadratic forms, Laplacians, and their pseudoinverses as well as the problem computing all-pairs effective resistances (Section~\ref{sub:prev_approach}) and provide an overview of how improve upon these techniques to obtain our results (Section~\ref{sub:our_approach}).

\subsection{Previous Approaches}
\label{sub:prev_approach}

\textbf{Sketching Laplacians.} Despite the enormous amount of research on graph sparsifiers in recent years, the natural concept of $\epsilon$-spectral sketches was introduced only recently in \cite{quadforms}. As our work builds heavily on that of \cite{quadforms}, we first summarize and motivate their approach for obtaining $\otilde (n/\epsilon^{5/3})$-sparse $\epsilon$-spectral sketches.\footnote{The authors of \cite{quadforms} also provide a more complicated algorithm to construct $\otilde (n/\epsilon^{1.6})$-sparse $\epsilon$-spectral sketches, we describe the approach of their simpler $\otilde (n/\epsilon^{5/3})$-sparse sketch as it contains the main ideas we leverage. }

Given an undirected graph $G$ the authors of \cite{quadforms} first directly store $\gdiag$ in their sketch. This diagonal matrix spectrally dominates the adjacency $\gadj$ and only has a linear number of nonzero entries to store. To obtain greater accuracy in the sketch, they then sample from the adjacency matrix. However, doing so naively, e.g. by for example sampling random edges with uniform probabilities, does not yield the desired result; such a procedure would likely severely over or underestimate the number of edges incident upon low degree vertices. To fix this, \cite{quadforms} stores all of the edges incident upon vertices of low degree. Since these nodes have low degree, storing their edges does not take too much space. Sampling is then used only on the remaining portion of the graph. 

If the graph is has large conductance, i.e. is well-connected, to account for the contribution of edges between high degree nodes, they randomly sample a small number of edges incident upon each high degree vertex. With this information, given a vector $x$ the sketch in \cite{quadforms} simply computes $x^T\gdiag x - x^T \tilde{\mathcal{A}}_G x$, where  $\tilde{\mathcal{A}}_G$ is the adjacency matrix of the chosen edges. To see why this works, note that $x^T\gdiag x$ is effectively a weighted volume of $x$ and $x^T \gadj x$ is effectively a weighted sum of edges inside the set. Since they exactly computing the former and approximate the latter, if $G$ has high conductance then $x^T\glap x$ is not too small when compared to the error incurred when approximating $x^T \gadj x$ and this yields a spectral sketch as desired. 

To extend this result to general graphs, \cite{quadforms} recursively partitons the graph along sparse cuts, stores the crossing edges, and applies the above sketching procedure to each partition. Choosing parameters appropriately then yields their result.

\textbf{Sketching Laplacian Pseudoinverses.} In contrast to spectral sketching the Laplacian, the problem of $\epsilon$-spectral sketching the pseudoinverse of a Laplacian has remained unexplored.
Nevertheless, there are three previous approaches towards similar problems that we are aware of. 

First, it is known that the pseudoinverse of an $\epsilon$-spectral sparsifier of $\lap$ is an $\epsilon$-spectral sparsifier of $\lap^\pseudo$. Consequently, since $\otilde(n/\epsilon^2)$ bit $\epsilon$-spectral sparsifiers of Laplacians are known to exist and since spectral sparsification is harder than spectral sketching, $\otilde(n/\epsilon^2)$-sparse $\epsilon$-spectral sketches of the pseudoinverse are known. However, this is also known to be un-improvable through previously discussed lower bounds on spectral sparsification.

Another approach is to apply the Johnson-Lindenstrauss lemma to the the pseudoinverse. This gives an $\epsilon$-spectral sketch for $\lap^\pseudo$, but it too requires $\otilde(n/\epsilon^2)$ space to store. Further, this technique works for an arbitrary arbitrary positive semidefinite matrices as well for which it is known that $O(n / \epsilon^2)$-sparse $\epsilon$-spectral sketches do not exist  \cite{LN16}.

A third approach has been studied in the work on \emph{resistance sparsifiers} by \cite{DKW}. An $\epsilon$-resistance sparsifier of a given graph $G$ is a graph $H$ which retains the pairwise effective resistances of $G$ up to a $1 \pm \epsilon$ multiplicative factor. In \cite{DKW} it is shown how to construct  $\epsilon$-resistance sparsifiers of dense regular graphs with sufficiently high conductance with $\otilde(n /\epsilon)$ edges. In addition to requiring the input graph is dense regular and an expander, this approach is also limited in that the quadratic forms they are able to sketch are those corresponding to effective resistance queries, not the general quadratic forms that our definition of a spectral sketch requires. Despite these drawbacks, their algorithm does return a graph, 
while our algorithm merely provides a sketch. It is as of yet unclear if the work of \cite{DKW} could be extended in any way to the general graph and general query setting, and thereby allow us to compute graphs which sketch quadratic forms of Laplacians. 

\textbf{All Pairs Effective Resistances} As an application of the machinery we have developed, we consider the problem of approximating all pairs effective resistances, i.e. the effective resistance between every pair of vertices in the graph. 

The previous best known previous result for this problem of computing all pairs effective resistances was achieved by $\cite{SS}$. They achieved a running time of $\otilde (n^2/\epsilon^2)$, from a sketch of $\otilde (n/\epsilon^2)$ bits. This was done by noting that for any graph $G$, we can write $\glap = \ginci^\top \ginci$, where $\ginci$ is the incidence matrix of $G$, an explicit matrix with no more non-zero entries then $\glap$. With this, we have
\[R_{uv} = (\chi_u - \chi_v)^\top \glap^{\pseudo} (\chi_u - \chi_v) = (\chi_u - \chi_v)^\top \glap^{\pseudo} \ginci^\top \ginci \glap^{\pseudo} (\chi_u - \chi_v) = \|\ginci \glap^{\pseudo} (\chi_u - \chi_v)\|_2^2,
\]
where $\chi_i \in \mathbb{R}^n$ is $1$ in its $i^{th}$ component and $0$ everywhere else. Therefore, the effective resistances in a graph are the $\ell_2^2$ distances between vectors $\ginci \glap^{\pseudo} \chi_u$. With this observation, simply applying a Johnson-Lindenstrauss projection to each of these $n$ vectors approximately preserves the pairwise distances between them with high probability. Computing and storing these projected vectors allows for approximate recovery of all effective resistances in the claimed running time and space. 

\subsection{Our Approach}
\label{sub:our_approach}

To obtain our results, we build upon the foundations described in the previous subsection. We start by describing our improvement on \cite{quadforms} for sketching $\lap$. 

\textbf{Improvements for Sketching Laplacians.} Our approach builds on that of \cite{quadforms}. We note that the partitioning scheme used in \cite{quadforms} to generate subgraphs of high conductance is fundamentally limited in the degree of conductance ultimately attainable. Moreover, the partitioning procedure may often cut edges unnecessarily, as it will separate any sufficiently sparse cut, it is possible that this algorithm will break apart a subgraph with conductance only slightly below the threshold and needlessly store a large number of edges. Even worse, the subgraphs generated by this split may not necessarily have a better conductance than what was started with. As the number of crossing edges cannot be too large, this means that the best conductance guarantee a recursive scheme could hope to obtain is likely insufficient to obtain $\otilde(n/\epsilon)$-sparse $\epsilon$-spectral sketches. 

Consequently, we obtain our conductance guarantee in a different way. Our main insight is to use a different recursive partitioning procedure similar to many previous works on almost linear time graph algorithms \cite{ST,KelnerLOS14,MadryST15,CohenKPPRSV17}. Similar to these works, we use that the work of \cite{ST} shows that there exists a partition of the nodes of unweighted graphs into a logarithmic number of expander levels with ``good enough" expansion. More specifically, they show that any graph $G$ can be partitioned into $O(\log n)$ levels $G_1, G_2, ... G_k$, where each level's connected components are all expanders with  $\Omega(1/ \log n)$ conductance. Applying the above sketch to each of these connected components almost immediately gives us  $\otilde(n/\epsilon)$-sparse $\epsilon$-spectral sketches. 

Unfortunately, the approach used to find such a partition in \cite{ST} is to compute several exact sparsest cuts at every stage. As sparsest cut is known to be $NP$-hard, this procedure is infeasible. Mercifully, \cite{ST} also provides a nearly-linear time algorithm to compute an edge partition which, while not being a strict expander partition, is close enough to being an expander partition that we can prove that running \cite{quadforms}'s sketch on each of these connected components gives us our desired result anyways. This technique has found many uses in the previous literature and has given nearly-linear time algorithms for a diverse set of graph problems \cite{KelnerLOS14,MadryST15,CohenKPPRSV17}.

\textbf{Improvements for Sketching Laplacian Pseudoinverses.} To achieve our space improvement for sketching the Laplacian pseudoinverse, we provide a general reduction from the problem of sketching a matrix pseudoinverse $A^\pseudo$ to the problems of sketching $A$ and computing a (weaker) sparsifier of $A$. We construct a $\otilde(n/\epsilon)$-size spectral sketch of $\mlap^\pseudo$ by combining an $\otilde(n/\epsilon)$-size spectral sketch of $\mlap$ and a $(1 \pm  \sqrt{\epsilon})$-spectral sparsifier of $\mlap$. To query the sketch we solve a linear system in the computed sparsifier and apply our sketch for $\glap$ in a natural way to ``smooth" out the resulting estimate. This approach is inspired by fairly well known techniques in regression \cite{DWM} \cite{LS} which show that although and stochastic descent \cite{JulienSB12} that when sampling $k$ components of decomposable strongly-convex function solving on the samples gives a vector whose quality decays at a $1/k$-rate, rather than a $1/\sqrt{k}$ rate. While the mathematics behind this approach is known and fairly straightforward, its application for spectral sketches and effective resistance computation seem quite new as observed by the faster all-pairs effective resistance computation we discuss next. We hope that this idea may be useful in settings outside of the Laplacian paradigm. 

\textbf{Improvements to All Pairs Effective Resistances.} Finally, we describe our improvements to approximating all pairs effective resistances in a graph. We note that by our result on sketching $\lap^{\pseudo}$, we are able to store a sketch which can correctly approximate every $R_{uv} = (\chi_u - \chi_v)^\top \lap^\pseudo (\chi_u - \chi_v)$ with high probability. Our claimed space bound follows. However, this naive strategy does not provide the desired runtime improvement on computation. Since computing each $R_{uv}$ in this way requires approximately solving a linear system of $\otilde (n/\epsilon)$ nonzeroes computing all of the effective resistances in this way will take $\otilde (n^3/\epsilon)$ time. 

However, we can do better by taking the pseudoinverse sketch and leveraging some details of its construction to ``matrix-ize" it. More precisely, we quickly convert our $\otilde (n/\epsilon)$-bit data structure into a $n \times n$ dense matrix $\mathcal{M}$ where querying our data structure with $\x$ is equivalent to computing $\x^\top \mathcal{M} \x$.  As this matrix is an approximate pseudoinverse, we can then use it to read off the effective resistances as we please. It is important to note that this procedure only works in the all-pairs regime. Although one might hope this would be useful for computing leverage scores of all edges, we require $\otilde(n^2/\epsilon)$ work before computing any resistances. In dense graphs however, this is an improvement by an $\epsilon$ factor over the previous work, and it suggests that similar running times may be possible for computing leverage scores across all edges in sparse graphs. 

\section{Sketching Laplacians} 
\label{sec:lap_sketch}

Here we prove Theorem~\ref{thm:main}, i.e. we show how to produce a $\otilde(n/ \epsilon)$-sparse $\epsilon$-spectral sketch of a Laplacian for all $\epsilon > 0$. As explained in Section~\ref{sub:our_approach}, our approach is to partition the graph $G$ into connected components where each connected component lies inside some expander inside $G$ and then sketch each of these connected components separately. In Section \hyperlink{31}{4.1}, we provide the algorithm for each of the individual components and then in Section~\hyperlink{32}{4.2} 
we provide the partitioning procedure and put everything together to prove Theorem~\ref{thm:main}.

\hypertarget{31}{\subsection{Sketching Subsets of Expanders}}
\label{sub:sketch_expander}

Here we construct our basic sketching tool for sketching well-connected subgraphs. The main result of this section is to prove the following result which is a key component of our proof of Theorem~\ref{thm:main}. 

\begin{lemma}[Sketching Subgraphs of Expanders]
\label{lemma:expandersketch}
	Let $H$ be an unweighted undirected graph with $n$ nodes and $m$ edges and let $T$ be some unknown unweighted graph with $n'$ nodes and conductance $h$ which contains $H$ as an induced subgraph. Let $\x \in \mathbb{R}^{n}$ be any vector, and let $\z \in \mathbb{R}^{n'}$ be any other vector such that $\x_u = \z_u$ for all $u \in H$. For any $\alpha \geq 1$, there exists a function $f(\cdot)$ where 
	\[
	\mathbb{E}[f(\z)] = \x^\top \hlap \x
	~
	\text{ and }
	~
	\var{f(\z)} \leq 4\alpha^{-2}h^{-4} (\z^\top  \lap_T \z)^2.
	\]
	Moreover, $f(\cdot)$ can be stored in $\otilde (n \alpha)$ space and it can be constructed in $\otilde(m)$ time. Further, $f$ has $\otilde (n \alpha)$ query time.
\end{lemma}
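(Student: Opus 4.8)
The plan is to follow the sketching template of \cite{quadforms} applied to the induced subgraph $H$, but to carry out the variance analysis using the conductance $h$ of the ambient graph $T$ rather than of $H$ itself. Recall $\hlap = \hdiag - \hadj$, so $\x^\top \hlap \x = \x^\top \hdiag \x - \x^\top \hadj \x$. The diagonal part $\x^\top \hdiag \x = \sum_u \delta_u(H)\, \x_u^2$ is stored exactly: $\hdiag$ has only $n$ nonzero entries, costs $\otilde(n)$ space, and is computed in $\otilde(m)$ time. So $f(\z)$ will output $\x^\top \hdiag \x - g(\z)$ where $g$ is an unbiased estimator of $\x^\top \hadj \x = \sum_{\{u,v\}\in E(H)} 2\,\x_u \x_v$, built only from the coordinates $\z_u = \x_u$ for $u \in H$.

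The estimator $g$ is the standard ``store low-degree neighborhoods, sample high-degree ones'' construction. Fix a degree threshold (to be chosen as a $\mathrm{poly}(\log n)\cdot\alpha$-type quantity); store every edge incident to a vertex of $H$-degree below the threshold exactly — this is $\otilde(n\alpha)$ edges total by a standard counting argument. For each high-degree vertex $u$, sample $\otilde(\alpha)$ of its incident edges uniformly (with replacement), rescaling each sampled edge $\{u,v\}$ by $\delta_u(H)/(\text{number of samples})$ so that the contribution to $\sum_v \x_u \x_v$ is unbiased; symmetrize over the two endpoints as in \cite{quadforms} so the whole sum $\sum_{\{u,v\}} 2\x_u\x_v$ is estimated without bias. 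Unbiasedness of $g$, hence $\mathbb{E}[f(\z)] = \x^\top\hlap\x$, is then immediate by linearity. The space and time bounds ($\otilde(n\alpha)$ storage, $\otilde(n\alpha)$ query, $\otilde(m)$ construction) are routine bookkeeping on this data structure.

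The crux is the variance bound $\var{f(\z)} \le 4\alpha^{-2} h^{-4} (\z^\top \lap_T \z)^2$. Since the exactly-stored part contributes no variance, $\var{f(\z)} = \var{g(\z)}$, and the sampled edges at distinct high-degree vertices are independent, so the variance is a sum over high-degree vertices $u$ of per-vertex terms of order $\alpha^{-1}\cdot \delta_u(H)\cdot\text{(local second-moment of }\x_u\x_v\text{)}$ — concretely each is $O(\alpha^{-1}) \sum_{\{u,v\}\in E(H)} \delta_u(H)\,\x_u^2\,\x_v^2$ type quantity, which one bounds crudely by $O(\alpha^{-1})\, \big(\sum_u \delta_u(H)\x_u^2\big)\cdot \max_v \x_v^2$ or, better, by a Cauchy--Schwarz step, by $O(\alpha^{-1}) (\x^\top \hdiag \x)^2$. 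Now the key point: because $H$ is an \emph{induced} subgraph of $T$ and $\x$ agrees with $\z$ on $V(H)$, every vertex of $H$ retains at least its $H$-degree inside $T$; combined with $\Phi_T \ge h$ and Cheeger's inequality (Lemma~\ref{lem:cheeger}), which gives $\lambda_2(\gnorm_T)\ge h^2/2$, one gets a Poincaré-type inequality bounding $\x^\top\hdiag\x$ (equivalently the ``volume-weighted variance'' of $\z$ on $V(H)$, after the right centering kills the $\vones$ direction) by $O(h^{-2})\,\z^\top\lap_T\z$. Squaring yields the $h^{-4}$; tracking constants and choosing the sample count $=\Theta(\alpha)$ with the right absolute constant gives the claimed $4\alpha^{-2}h^{-4}(\z^\top\lap_T\z)^2$.

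I expect the main obstacle to be exactly this last transfer step: turning the ``internal'' quantity $\x^\top\hdiag\x$ into the ``external'' quantity $\z^\top\lap_T\z$ through the conductance of $T$. The subtlety is that $\lap_T$ lives on $n'$ coordinates and has a nontrivial kernel ($\vones_{n'}$), so one must argue that the relevant energy is invariant under shifting $\z$ by multiples of $\vones_{n'}$ and that, after the optimal shift, $\z^\top\lap_T\z \gtrsim \lambda_2(\gnorm_T)\cdot \z^\top\gdiag_T\z \gtrsim h^2 \sum_{u\in V(H)}\delta_u(T)\,\x_u^2 \ge h^2\, \x^\top\hdiag\x$ — i.e. the degrees only grow when passing from $H$ to $T$, so restricting the $T$-volume to $V(H)$ still dominates the $H$-volume. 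Getting the centering/null-space handling right, and making sure the per-vertex variance estimate is bounded by the \emph{square} of this volume (not the volume itself), is where the real care is needed; the rest is the \cite{quadforms} machinery plus standard concentration bookkeeping.
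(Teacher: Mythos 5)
Your overall architecture is the same as the paper's: store $\hdiag$ and the low-degree neighborhoods exactly, sample $\alpha$ edges per high-degree vertex, and transfer the variance bound to $T$ via degree monotonicity ($\delta_u(H)\le\delta_u(T)$ for $u\in V(H)$, since $H$ is induced in $T$) plus Cheeger's inequality on $T$. You correctly identify that transfer step as the crux. However, there are two concrete gaps.

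First, your variance bound is a factor of $\alpha$ too weak. You arrive at $O(\alpha^{-1})(\x^\top\hdiag\x)^2$, but the lemma claims $\alpha^{-2}$, and this is not cosmetic: with storage $\otilde(n\alpha)$ and variance $\alpha^{-1}$ you would need $\alpha\sim 1/\epsilon^2$ downstream, giving an $\otilde(n/\epsilon^2)$ sketch rather than $\otilde(n/\epsilon)$. The missing step is that the sampled edges join only \emph{high-degree} vertices, so for every $v$ in the sampled part $1\le\delta_v/\alpha$; this converts the unweighted factor $\sum_{v\in L}\y_v^2$ appearing in the per-vertex variance into $\alpha^{-1}\sum_v\delta_v\y_v^2$, and yields
\[
\var{f}\;\le\;\frac{1}{\alpha}\Bigl(\sum_{u\in L}\delta_u^L\y_u^2\Bigr)\Bigl(\sum_{v\in L}\y_v^2\Bigr)\;\le\;\frac{1}{\alpha^2}\,\bigl\|\hdiag^{1/2}\y\bigr\|_2^4 .
\]
One $\alpha^{-1}$ comes from averaging $\alpha$ samples; the second comes from the degree threshold. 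Your "Cauchy--Schwarz step" does not supply the second.

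Second, the centering must live in the \emph{estimator}, not only in the analysis. As written, your $f(\z)=\x^\top\hdiag\x-g(\z)$ applies the sampling estimator to the raw coordinates $\x_u$, and its variance is then controlled by $\|\hdiag^{1/2}\x\|_2^4$, which cannot be bounded by $(\z^\top\lap_T\z)^2$ (take $\x$ near a constant vector: the quadratic form is near $0$ but the volume term is not). The algorithm must first form $\y=\x-c\vones$ with $c=\vones^\top\hdiag\x/\vones^\top\hdiag\vones$ and evaluate the whole quadratic form at $\y$; unbiasedness survives because $\hlap\vones=\vzeros$. Since $T$ is unknown, one cannot center with respect to $\mathcal{D}_T$ at query time; instead one uses that this $c$ \emph{minimizes} $\|\hdiag^{1/2}(\x-k\vones)\|_2$ over $k$, so the bound can be relaxed to the $\mathcal{D}_T$-weighted centering $c'$ of $\z$, whose residual is orthogonal to $\mathcal{D}_T\vones$ — exactly what Cheeger's inequality on $\mathcal{N}_T$ requires. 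You gesture at "the optimal shift" but do not resolve this computable-centering versus analysis-centering mismatch, which is the precise mechanism the proof needs.
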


Our sketching procedure is as follows. Given the graph $H$, we store its degree matrix $\hdiag$ and store all edges incident upon nodes of sufficiently low degree. For the remaining edges (connecting two high-degree nodes), we simply iterate through all of the high degree nodes and sample a small number of edges with replacement from the neighborhoods of each. The pseudocode for this routine is $\texttt{SampleSketch}$ given in Algorithm~\ref{alg:sampler}.

Given the sketch above, we evaluate $f(\x)$ as follows. We begin by taking $\x$ and scaling it by an appropriate multiple of the all-ones vector to give $\y$. Now, since our sketch is effectively a combination of degree matrix and adjacency matrix, we then choose our approximator as $\y^\top \hdiag \y - \y^\top \tilde{\hadj} \y$. The pseudocode for this routine is $\texttt{ExpanderEval}$ given in Algorithms~\ref{alg:exeval}.

Our goal is to show that $\y^\top \tilde{\hadj} \y$ is $\y^\top \hadj \y$ in expectation, the variance of $\y^\top \tilde{\hadj} \y$ can be related to $\y^\top \hdiag \y$, and finally the scaling by the ones vector allows us to show that $(\y^\top \hdiag \y)/(\y^\top \hlap \y)$ is at most a function of the conductance of $T$. These pieces combined give us our claimed expectation and variance bounds and let us prove that $f(\x) =  \texttt{ExpanderEval}(\texttt{SampleSketch}(H,\alpha),\x)$ as given by Algorithms~\ref{alg:sampler} and \ref{alg:exeval} satisfies the claims of Lemma \ref{lemma:expandersketch}. 

Note that this algorithm is effectively the same as the sketching procedure from \cite{quadforms}, and much of the analysis is inherited as well. Our main contribution is a proof of a slightly more general variance bound, which will be important later when enforcing our required conductance guarantees.

In the remainder of this section we prove Lemma~\ref{lemma:expandersketch} in two parts. First in Lemma~\ref{lemma:expander1} we prove our claimed results on $f(\cdot)$'s expectation and variance and then in Lemma~\ref{lemma:expander2} we prove the space and query time claims for the sketch. Lemma~\ref{lemma:expandersketch} follows immediately from these lemmas. 

\begin{algorithm}[t]
	\SetAlgoLined
	\KwIn{Unweighted graph $H = (V, E)$, $|V| \leq n$ and a degree threshold $\alpha$}
	\KwOut{\dat($H$), the data for an $\epsilon$-spectral sketch of $H$.}
	
	$S \gets \{u | u \in V, \delta_u \leq \alpha\}$\;
	$L \gets V \backslash S$\;
	$H_L \gets$ $H$ induced on $L$\;
	\dat($H$)$ \gets \emptyset$\;
	Add $S, L,$ and $\alpha$ to \dat($H$)\;
	Add $\delta_u\ \forall u \in V$ to \dat($H$)\; 
	\For {$u \in \mathcal{S}$}{
		Add all of $u$'s adjacent edges to \dat($H$)\;
	}
	
	\For {$u \in L$}
	{
		Add $\delta_u^L$ to \dat($H$)\;
		$E_u \gets \{ (u, v)\ |\ v \in L \}$\;
		Sample (with replacement) $\alpha$ edges from $E_u$ uniformly\;
		Add the sampled $\alpha$ edges (including repetitions) to \dat($H$)\;}
	\Return{\dat($H$)}
	\caption{$\texttt{SampleSketch}$($H,\alpha$) (Algorithm 6 from \cite{quadforms})} \label{alg:sampler}
\end{algorithm}

\begin{algorithm}[t]
	\SetAlgoLined
	\KwIn{The sketch \dat($H$) returned by Algorithm \ref{alg:sampler}; a vector $\x \in \mathbb{R}^n$}
	\KwOut{An approximation to $\x_H^\top \hlap \x_H$}
	$\hdiag \gets$ the degree matrix found in $\dat(H)$\;
	$c = \frac{\textbf{1}^\top \hdiag \x}{\textbf{1}^\top \hdiag \textbf{1}}$\;
	$\y = \x -  c\textbf{1}$\;
	\Return{
		\[
		\sum_{u \in V(H)} \y_u^2 \delta_u - \sum_{u \in S} \sum_{\substack{v \in V(H) \\ (u,v) \in \dat(H)}} \y_u \y_v -  \sum_{u \in L} \sum_{\substack{v \in S \\ (u,v) \in \dat(H)}} \y_u \y_v - \sum_{u \in L} \frac{\delta^L_u}{\alpha} \sum_{v \in L} \y_u \y_v Y^u_v.
		\]}
	\caption{$\texttt{ExpanderEval}$(\dat($H$),$\x$)}
	\label{alg:exeval}
\end{algorithm}

\begin{lemma}[Expander Sketch - Expectation and Variance]
\label{lemma:expander1}
Let $H$ be an unweighted undirected graph with $n$ nodes and $m$ edges and let $T$ be an unknown unweighted undirected graph with $n'$ nodes and conductance $h$ which contains $H$ as an induced subgraph. Let $\x \in \mathbb{R}^{n}$ be any vector, and let $\z \in \mathbb{R}^{n'}$ be any other vector such that for all $u \in H$ $\x_u = \z_u$. Then, for any $\alpha \geq 1$ our computed $f(\x) = $ \texttt{ExpanderEval}($\texttt{SampleSketch}$($H,\alpha$),$\x$) satisfies $ \mathbb{E}(f(\x)) = \x^\top \hlap \x $ and 
$
\var{f(\x)} \leq 4\alpha^{-2}h^{-4} (\z^\top \lap_T \z)^2
$.
\end{lemma}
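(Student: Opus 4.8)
The plan is to analyze the estimator produced by \texttt{ExpanderEval} term by term, separating the deterministic part (degrees and low-degree edges) from the random part (the sampled edges between high-degree vertices). First I would observe that the shift $\y = \x - c\vones$ does not change the quadratic form we are trying to estimate, since $\hlap \vones = \vzeros$ gives $\y^\top \hlap \y = \x^\top \hlap \x$; the point of the particular choice $c = (\vones^\top\hdiag\x)/(\vones^\top\hdiag\vones)$ is that it makes $\y$ orthogonal to $\hdiag\vones$, i.e. $\vones^\top \hdiag \y = 0$, which will be the crucial fact for the variance bound at the end. I would then write the output of \texttt{ExpanderEval} as $\y^\top\hdiag\y - \y^\top\tilde{\hadj}\y$, where $\tilde{\hadj}$ is the random matrix built from the stored edges: exact contributions from every edge with a low-degree endpoint, and for each high-degree vertex $u$ the scaled empirical average $\frac{\delta_u^L}{\alpha}\sum_{i=1}^\alpha (\text{indicator of sampled edge } Y_i^u)$ over its $L$-neighborhood.

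For the expectation, I would note that each sampled edge $Y_i^u$ from $u$'s $L$-neighborhood $E_u$ of size $\delta_u^L$ is uniform, so $\mathbb{E}[\y_v \text{ for the sampled neighbor}] = \frac{1}{\delta_u^L}\sum_{v\in L} \y_v \cdot (\text{adjacency})$; multiplying by $\frac{\delta_u^L}{\alpha}$ and summing over the $\alpha$ samples recovers exactly $\sum_{v \in L}\y_u\y_v\hadj_H(u,v)$. Combined with the deterministically-stored low-degree edges, this shows $\mathbb{E}[\y^\top\tilde{\hadj}\y] = \y^\top\hadj_H\y$, hence $\mathbb{E}[f(\x)] = \y^\top\hdiag\y - \y^\top\hadj_H\y = \y^\top\hlap_H\y = \x^\top\hlap\x$. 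For the variance, only the sampled terms are random, and samples across distinct vertices $u$ (and across the $\alpha$ independent draws for a fixed $u$) are independent, so the variance is a sum of per-sample variances. Each single sample contributes a term of the form $\frac{(\delta_u^L)^2}{\alpha^2}\,\y_u^2\,\mathbf{Var}[\y_{V}]$ where $V$ is the uniform random neighbor; bounding $\mathbf{Var}[\y_V] \le \mathbb{E}[\y_V^2] = \frac{1}{\delta_u^L}\sum_v \y_v^2 \hadj_H(u,v)$ and summing the $\alpha$ copies gives a per-vertex bound of $\frac{\delta_u^L}{\alpha}\y_u^2\sum_v\y_v^2\hadj_H(u,v)$. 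Summing over $u$ and using $\delta_u^L \le \delta_u$ together with AM-GM ($\y_u^2\y_v^2 \le \tfrac12(\y_u^4+\y_v^4)$ or a weighted version) yields a bound of the shape $\frac{1}{\alpha}\big(\sum_u \delta_u \y_u^2\big)^2 = \frac{1}{\alpha}(\y^\top\hdiag\y)^2$, possibly with a small constant; I would track constants carefully to land at the stated $4\alpha^{-2}h^{-4}$ after the next step (the $\alpha^{-1}$ versus $\alpha^{-2}$ discrepancy should be resolved by being more careful — each sample is scaled by $\delta_u^L/\alpha$ and there are $\alpha$ of them, so one factor of $\alpha$ survives in the numerator but the $1/\alpha^2$ scaling dominates, giving $1/\alpha$; to get $1/\alpha^2$ one uses that $\delta_u^L \le \delta_u$ and $\delta_u^L/\alpha \le \delta_u/\alpha$, absorbing one more factor — I would recheck this against the claimed exponent).

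The final and most delicate step is to replace $\y^\top\hdiag\y$ by a multiple of $\z^\top\hlap_T\z$, which is where the conductance hypothesis on $T$ enters. Here I would use that $\y$ is $\hdiag$-orthogonal to $\vones$ on $H$; extending $\y$ to $\z$ on the larger vertex set $T$ (they agree on $H$) and using that $H$ is an \emph{induced} subgraph of $T$ so that $\hdiag \preceq \mathcal{D}_T$ on the coordinates of $H$, I can relate $\y^\top\hdiag\y$ to $\z^\top\mathcal{D}_T\z$ up to handling the component of $\z$ along $\vones$ in $T$. Then Cheeger's inequality (Lemma~\ref{lem:cheeger}) gives $\lambda_2(\mathcal{N}_T) \ge h^2/2$, i.e. $\mathcal{L}_T \succeq \tfrac{h^2}{2}\big(\mathcal{D}_T - \tfrac{\mathcal{D}_T\vones\vones^\top\mathcal{D}_T}{\vones^\top\mathcal{D}_T\vones}\big)$ in the $\mathcal{D}_T$-geometry, so for any vector the part orthogonal (in $\mathcal{D}_T$) to $\vones$ satisfies $\z^\top\mathcal{D}_T\z \le \tfrac{2}{h^2}\z^\top\mathcal{L}_T\z$. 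The technical friction is that $\y$ is $\hdiag$-orthogonal to $\vones$ in $H$ but not necessarily $\mathcal{D}_T$-orthogonal to $\vones$ in $T$; I would argue that since $f$ only depends on $\y$ restricted to $H$ and on the edges inside $H$, I may further shift $\z$ by a multiple of $\vones_T$ to make it $\mathcal{D}_T$-orthogonal to $\vones$ without changing either $\z^\top\mathcal{L}_T\z$ or the value of $f$, reducing to the clean case. Chaining these inequalities, $\y^\top\hdiag\y \le \z^\top\mathcal{D}_T\z \le \tfrac{2}{h^2}\z^\top\mathcal{L}_T\z$, and substituting into the variance bound from the previous paragraph produces $\mathbf{Var}[f(\x)] \le 4\alpha^{-2}h^{-4}(\z^\top\mathcal{L}_T\z)^2$ as claimed. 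I expect the conductance-to-degree comparison — carefully justifying the two orthogonality shifts and the inducedness — to be the main obstacle; the expectation and raw variance computations are essentially bookkeeping inherited from \cite{quadforms}.
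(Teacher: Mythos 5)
Your proposal follows essentially the same route as the paper: compute the expectation by linearity, bound the variance of the sampled part by a product of degree-weighted sums, use the minimality of the $\hdiag$-weighted centering $c$ to pass from $\y^\top\hdiag\y$ to the $\mathcal{D}_T$-quadratic form of a $\mathcal{D}_T$-mean-zero shift of $\z$, and finish with Cheeger. The centering step you describe (shift $\z$ so it is $\mathcal{D}_T$-orthogonal to $\vones$, then invoke minimality of $c$) is equivalent to the paper's comparison of $c$ against $c' = \vones^\top\mathcal{D}_T\z / \textsf{Vol}(T)$, and only subgraph containment (degrees in $H$ bounded by degrees in $T$) is needed there, not induced-ness.

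The one genuine soft spot is your resolution of the $\alpha^{-1}$ versus $\alpha^{-2}$ discrepancy. The fix you propose ($\delta_u^L \leq \delta_u$, ``$\delta_u^L/\alpha \leq \delta_u/\alpha$'') does not produce the missing factor of $1/\alpha$: after summing the per-sample variances you are left with roughly $\frac{1}{\alpha}\sum_{u\in L}\delta_u^L\y_u^2\sum_{v\in L}\y_v^2$, and the extra factor comes from the \emph{inner} sum, using that every $v \in L$ has degree $\delta_v > \alpha$ by the definition of $L$, so $\sum_{v\in L}\y_v^2 \leq \frac{1}{\alpha}\sum_{v\in L}\delta_v\y_v^2$. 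This yields $\frac{1}{\alpha^2}\bigl(\sum_u\delta_u\y_u^2\bigr)\bigl(\sum_v\delta_v\y_v^2\bigr) = \frac{1}{\alpha^2}\|\hdiag^{1/2}\y\|_2^4$ directly, with no need for the AM--GM step you mention (the bound is already a product of two separate sums). With that substitution your argument matches the paper's proof.
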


\begin{proof} Let $Y^u_v$ be the number of times we sampled the edge $(u,v)$ in line 13 of Algorithm \ref{alg:sampler}. Our algorithm returns the estimator
\[
\sum_{u \in V(H)} \y_u^2 \delta_u - \sum_{u \in S} \sum_{\substack{v \in V(H) \\ (u,v) \in \dat (H)}} \y_u \y_v -  \sum_{u \in L} \sum_{\substack{v \in S \\ (u,v) \in \dat(H)}} \y_u \y_v - \sum_{u \in L} \frac{\delta^L_u}{\alpha} \sum_{v \in L} \y_u \y_v Y^u_v,
\]
where $\y = \x - c\textbf{1}$ for some constant $c$ depending on $\hdiag $ and $\x$ and $\dat(H)$ is the sketch returned by $\texttt{SampleSketch}$. Note that for every node in $S$, all of the edges incident upon it are stored in the sketch. Note that the probability of sampling the edge $(u,v)$ for $u$ is $1/\delta^L_u$. Since we sample $\alpha$ edges for $u$, $Y^u_v$ is distributed as $B(\alpha, 1/\delta^L_u)$, a sum of $\alpha$ trials each with probability of success $1/\delta^L_u$. With this, we see 
\[
\mathbb{E}[Y^u_v] = \frac{\alpha}{\delta^L_u} \text{\quad \quad and \quad \quad} \var{Y^u_v} =  \frac{\alpha}{\delta^L_u}\left(1 - \frac{1}{\delta^{L}_u}\right) \leq \frac{\alpha}{\delta^L_u}.
\]
Further, we see that for any distinct nodes $u, v, w$ $Y_v^u$ and $Y_w^u$ are inversely correlated; choosing $(u,v)$ only decreases the number of times we choose $(u,w)$. Consequently, by linearity of expectation, 
\begin{align*}
\mathbb{E}[f(\x)] &= \sum_{u \in V(H)} \y_u^2 \delta_u - \sum_{u \in S} \sum_{\substack{v \in V(H) \\ (u,v) \in E(H)}} \y_u \y_v -  \sum_{u \in L} \sum_{\substack{v \in S \\ (u,v) \in E(H)}} \y_u \y_v - \sum_{u \in L}  \sum_{\substack{v \in L \\ (u,v) \in E(H)}} \y_u \y_v \\
&= \y^\top \hlap \y = \x^\top \hlap \x,
\end{align*}
since $\hlap$ is orthogonal to the all ones vector.

We will now prove our variance bound. Note that $\hdiag $ and $\x$ are fixed, and further only the last of these four sums in $f$ depends on any random decisions. Therefore,
\begin{align}
\var{f(\x)}  &= \var{\sum_{u\in L}\frac{\delta_u^ L}{\alpha} \sum_{v\in L} \y_u\y_vY_u^v} \label{eq:var}
\leq 
\sum_{u\in L}\frac{(\delta_u^ L)^2}{\alpha^2} \sum_{v\in L} 
\y_u^2\y_v^2\var{Y_u^v} \\
&\leq \sum_{u\in L}\frac{(\delta_u^ L)^2}{\alpha^2} \y_u^2\sum_{v\in L} \y_v^2\frac{\alpha }{\delta_u^ L} =  \frac{1}{\alpha}\sum_{u\in L}\delta_u^ L \y_u^2\sum_{v\in L} \y_v^2\\
&\leq \frac{1}{\alpha}\sum_{u\in L}\delta_u^ L \y_u^2\sum_{v\in L} \y_v^2 \frac{\delta_v}{\alpha} \label{eq:hideg}\\
&\leq \frac{1}{\alpha^2} \sum_{u \in V} \delta_u\y_u^2\sum_{v \in V} \delta_v \y_v^2  =  \frac{1}{\alpha^2} \|D^{1/2}\y\|_2^4,              
\end{align}
where we used in \eqref{eq:var} that for distinct nodes $u$, $v$, and $w$, $Y_v^u$ and $Y_w^u$ are inversely correlated and $Y_w^u$ and $Y_w^v$ are independent, and we used $\alpha < \delta_v$ for all nodes $v \in L$ in \eqref{eq:hideg}. 

So far, our analysis is the same as \cite{quadforms}. However, next, we now bound $\|\hdiag^{1/2}\y\|_2^4$. In the setup of \cite{quadforms}, the conductance of $H$ would be leveraged directly via a Cheeger bound to give a bound in terms of $\y^\top \hlap \y$. However, we know nothing about $H$'s conductance itself-- although we know $T$ containing $H$ has high condutance, $H$ may itself have very poor conductance. We therefore apply Cheeger's inequality in a slightly more roundabout fashion. It can readily be shown that 
\[
c = \textsf{argmin}_k \|\hdiag^{1/2}(\x - k\textbf{1})\|_2.
\]
Therefore, our choice of $c$ when computing $f$ conveniently minimizes our current bound on the variance. With this, define $c' = \frac{1}{\textsf{Vol}(T)} \textbf{1}^\top \mathcal{D}_T \z$ and let $\w = \z -  c' \textbf{1}$. Note $\w$ is orthogonal to $\mathcal{D}_T \textbf{1}$. The minimality of $c$ gives us
\[
\|\hdiag^{1/2}\y\|_2^2  = \|\hdiag^{1/2}(\x - c\textbf{1})\|_2^2 \leq \|\hdiag^{1/2}(\x - c'\textbf{1})\|_2^2 = \sum_{u \in H} \delta_u (\x_{u} - c')^2.
\]
Now, let $\delta_u(T)$ be the degree of $u$ in the unknown expander $T$, and let $\mathcal{D}_T$ be the degree matrix of $T$. For any node $u$ in $H$, $\delta_u \leq \delta_u(T)$ since $H$ is a subgraph of $T$. In addition, we have $\x_{u} = \z_{u}$ by assumption. Therefore, 
\[
\sum_{u \in H} \delta_u (\x_{u} - c')^2 \leq \sum_{u \in H} \delta_u(T) (\z_{u} - c')^2 \leq \sum_{u \in T} \delta_u(T) (\z_{u} - c')^2 = \sum_{u \in T} \delta_u(T) \w_u^2 = \|\mathcal{D}_T^{1/2}\w\|_2^2.
\]
Set $\boldsymbol{\rho} = \mathcal{D}_T^{1/2}\w$. Then, 
\[
\|\mathcal{D}_T^{1/2}\w\|_2^2   = (\boldsymbol{\rho}^\top \boldsymbol{\rho}) \leq \frac{1}{\lambda_2(\mathcal{N}_T)}(\boldsymbol{\rho}^\top  \mathcal{N}_T \boldsymbol{\rho}) = \frac{1}{\lambda_2(\mathcal{N}_T)}(\w^\top  \lap_T \w),
\]
since $\boldsymbol{\rho}$ is orthogonal to $\mathcal{D}_T^{1/2}\textbf{1}$, the zero eigenvector of $N_T$, and now
\[
\frac{1}{\lambda_2(\mathcal{N}_T)}(\w^\top  \mathcal{L}_T \w) \leq \frac{2}{h^2}(\w^\top  \lap_T \w) = \frac{2}{h^2}(\z^\top  \lap_T \z)
\] 
by Cheeger's inequality (Lemma~\ref{lem:cheeger}). Putting this together, we get
\[
\var{f(\x)} \leq \frac{1}{\alpha^2} \|\hdiag^{1/2}\y\|_2^4 \leq \frac{1}{\alpha^2} \|\mathcal{D}_{T}^{1/2}\w\|_2^4 \leq \frac{4}{\alpha^2 h^4}(\z^\top  \lap_T \z)^2 ~.
\]
\end{proof}

\begin{lemma}[Expander Sketch - Space and Query Time]
	\label{lemma:expander2}
	Let $H$ be an unweighted undirected graph with $n$ nodes and $m$ edges. For any $\alpha \geq 1$, $f(\x)$ requires $\otilde(n\alpha)$ bits of space to store, and obtaining $f$ costs $\otilde(m)$ time. Querying $f$ can be performed in $\otilde(n\alpha)$ time. 
\end{lemma}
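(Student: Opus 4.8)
The plan is to account separately for the three kinds of data that \texttt{SampleSketch} writes into $\dat(H)$ and then bound the cost of \texttt{ExpanderEval} on this data. First I would handle the space bound. The degree matrix $\hdiag$ contributes one number per vertex, so $O(n)$ words (and $\otilde(n)$ bits under the polynomial-weight assumption). For the low-degree vertices in $S$, we store every incident edge; since each such vertex has weighted degree at most $\alpha$, the total number of stored edge-slots is at most $\sum_{u \in S} \delta_u \le n\alpha$, hence $\otilde(n\alpha)$ bits. For the high-degree vertices in $L$, for each $u \in L$ we store $\delta_u^L$ and exactly $\alpha$ sampled edges (with repetition), for a total of $O(n\alpha)$ edge-slots, again $\otilde(n\alpha)$ bits. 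Summing the three contributions gives the claimed $\otilde(n\alpha)$ space.

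Next I would bound the construction time. Computing all degrees $\delta_u$ and forming $S$, $L$, and the induced subgraph $H_L$ takes $\otilde(m)$ time by a single pass over the edge list. Writing out the incident edges of every vertex in $S$ costs $O(m)$ in the worst case (it is a subset of all edges). For the loop over $u \in L$: computing $\delta_u^L$ for all $u$ is another $\otilde(m)$ pass, and drawing $\alpha$ samples with replacement from $E_u$ can be done in $\otilde(|E_u|)$ time (build the neighbor array once, then $\alpha$ uniform draws), so the total over all $u \in L$ is $\otilde(m + n\alpha)$; since each high-degree vertex has degree at least $\alpha$ we have $n\alpha \le \sum_{u\in L}\delta_u \le 2m$, so this collapses to $\otilde(m)$. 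Hence the whole construction is $\otilde(m)$.

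Finally, the query time. Given $\dat(H)$ and a vector $\x$, \texttt{ExpanderEval} first computes $c = \frac{\vones^\top \hdiag \x}{\vones^\top \hdiag \vones}$ and $\y = \x - c\vones$, which is $O(n)$ arithmetic operations using the stored diagonal. The estimator is a sum of four terms: $\sum_u \y_u^2 \delta_u$ costs $O(n)$; the sum over $u \in S$ of $\y_u\y_v$ over stored edges costs $O(\#\text{edges incident to }S) = O(n\alpha)$; the sum over $u \in L$ of $\y_u\y_v$ over stored $S$-edges is bounded by the same edge count, $O(n\alpha)$; and the last term ranges over the $\alpha$ sampled edges per vertex of $L$, i.e. $O(n\alpha)$ terms, each a constant-work multiply-add (the multiplicities $Y^u_v$ are already encoded by the repetitions in $\dat(H)$). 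So the query is $\otilde(n\alpha)$ time.

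I do not expect a genuine obstacle here: every bound reduces to the observation that $\dat(H)$ contains $\otilde(n)$ diagonal entries, $O(\sum_{u\in S}\delta_u) \le n\alpha$ edges incident to $S$, and exactly $n\alpha$ sampled edge-slots for $L$, and that both the sampling step and the evaluation step do $\otilde(1)$ work per stored object. The only point needing a little care is confirming that sampling $\alpha$ edges with replacement from each high-degree neighborhood does not blow past $\otilde(m)$; this follows from $n\alpha \le 2m$ because every vertex of $L$ already has degree exceeding $\alpha$. The polynomial-weight assumption is what lets us pass freely between "words" and "bits" while absorbing the conversion into the $\otilde(\cdot)$.
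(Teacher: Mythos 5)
Your proposal is correct and follows essentially the same accounting as the paper: bound the three categories of stored data (degrees, edges incident to $S$, and the $\alpha$ samples per vertex of $L$) by $\otilde(n\alpha)$, charge the construction to a few $\otilde(m)$ passes plus $\otilde(\alpha)$ sampling work for each of the at most $O(m/\alpha)$ high-degree vertices, and observe that the query touches each stored object $O(1)$ times. The only nit is the inequality ``$n\alpha \le \sum_{u\in L}\delta_u$,'' which should read $|L|\,\alpha \le \sum_{u\in L}\delta_u \le 2m$; this is exactly the paper's ``at most $O(m/\alpha)$ nodes of degree more than $\alpha$'' and does not affect the argument.
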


\begin{proof} We first show the space bound. Note that the storage cost of $f$ is just the amount of space needed to store $\texttt{SampleSketch}$($H,\alpha$). We consider the amount of space each object stored in the sketch costs. $\texttt{SampleSketch}$ stores $n$ degrees, $\alpha, S$ and $L$; this clearly cost $O(n)$ in space. In addition, we store $O(n)$ different $\delta_u^L$, which costs $O(n)$ as well. Now, for every node  $u$ of degree less than $\alpha$, we store all $\delta_u \leq \alpha$ edges incident to it; this costs $\otilde(n\alpha)$. Finally, for every node $v$ of degree more than $\alpha$, we store $\alpha$ edges samples from its neighborhood. This too costs $\otilde(n\alpha)$., proving our space bound.

We now look at the time needed to find $f$. Note that again this is just the amount of time $\texttt{SampleSketch}$ needs to compute its samples. Clearly, computing the $\delta_u$, the $\delta_u^L$, $\alpha$, $S$, and $L$ can all be done in $\otilde(m)$ time. In addition, storing the edges incident upon every node of degree less than $\alpha$ can be done in $\otilde(m)$ time as well. Finally, sampling $\alpha$ edges from a high-degree node $v$'s neighborhood can be done in $\otilde(\alpha)$ time. Since there are at most $O(m/\alpha)$ nodes of degree more than $\alpha$, this entire sketch can be computed in $\otilde(m)$ time, as desired. 

Lastly, we look at the cost of querying $f$. Computing $f(\x)$ requires first computing $y =  x - c\textbf{1}$, where $c = \frac{\textbf{1}^\top \hdiag x}{\textbf{1}^\top \hdiag \textbf{1}}$. This can obviously be done in $O(n)$ time. Then, we compute 
\[ \sum_{u \in V(H)} \y_u^2 \delta_u - \sum_{u \in S} \sum_{\substack{v \in V(H) \\ (u,v) \in \dat(H)}} \y_u \y_v -  \sum_{u \in L} \sum_{\substack{v \in S \\ (u,v) \in \dat(H)}} \y_u \y_v - \sum_{u \in L} \frac{\delta^L_u}{\alpha} \sum_{v \in L} \y_u \y_v Y^u_v.
\] 
By the above space analysis, this expression sums $\otilde(n\alpha)$ terms, and therefore can be computed in $\otilde(n\alpha)$ time. Note that if $n\alpha > m$ then we can simply use the entire graph as the sketch and the result follows trivially.
\end{proof}

\hypertarget{32}{\subsection{Sketching General Graphs}}
\label{sub:sketch_graph}

With Lemma~\ref{lemma:expandersketch} established here we prove Therorem~\ref{thm:main} by partitioning an arbitrary graph $G$ into pieces satisfying Lemma~\ref{lemma:expandersketch}'s condition and sketching each piece separately. We make extensive use of a graph partitioning result of \cite{ST} on partitioning a graph into expander-like piece. In particular we use the following variant: 

\begin{lemma}[Lemma 32, \cite{KLOS}]
	Given an unweighted graph $G$, we can construct in time $\tilde O(m)$ a partition of the nodes $V_1, V_2, ... V_k$ and a sequence of subsets $S_1, S_2, ... S_k$ where
	\begin{itemize}
		\item $S_i \subseteq V_i,$ $\forall i$.
		\item The $S_i$ contain at least half the edges: $\sum |E[S_i]| > \frac{1}{2}m$.
		\item For every $i$, there exists a set $T_i$ where $S_i \subseteq T_i \subseteq V_i$ and $G(T_i)$ has conductance $\Omega(1/\log^2 n)$.
	\end{itemize}
	\label{lemma:part}
\end{lemma}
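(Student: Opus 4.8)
The plan is to prove this by the standard recursive sparse-cut decomposition underlying the near-linear-time graph partitioning of Spielman--Teng. The engine would be a black-box \emph{local partitioning primitive}: given a graph $G'$ with $m'$ edges and a target conductance $\phi$, it runs in $\otilde(m')$ time and returns either (a) a cut $(A, V(G')\setminus A)$ of $G'$ of conductance at most $\phi$ that is roughly volume-balanced, meaning each side has volume at least a $1/\mathrm{polylog}(n)$ fraction of $\mathrm{vol}(G')$, or (b) a set $T \subseteq V(G')$ with $\mathrm{vol}_{G'}(T) \geq (1-o(1))\mathrm{vol}(G')$ such that the induced subgraph $G'(T)$ has conductance $\Omega(\phi/\mathrm{polylog}(n))$. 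This dichotomy is what the Nibble/\texttt{Partition} routine of \cite{ST} provides; substituting a flow-based cut primitive for Cheeger-based local clustering is what would keep the polylog loss small enough for the $\Omega(1/\log^2 n)$ bound below.

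Given this primitive, I would fix $\phi = \Theta(1/\mathrm{polylog}(n))$ (calibrated below) and run the following recursion from $G$: on the current piece $G(W)$, invoke the primitive; in case (a), set aside the edges crossing the returned cut and recurse on $G(A)$ and on $G(W\setminus A)$; in case (b), declare $W$ a final part $V_i$, set $T_i := T$, and set $S_i := T_i$, so $S_i \subseteq T_i \subseteq V_i$ holds trivially (the extra slack in the stated nesting is convenient but not needed for this argument). Since the cuts in case (a) are volume-balanced up to polylog factors, each split shrinks a piece's volume by a $(1 - 1/\mathrm{polylog}(n))$ factor, so the recursion tree has depth $D = \mathrm{polylog}(n)$; as the pieces at any fixed level are vertex-disjoint with total size at most $m$ and each primitive call is near-linear, the construction runs in $D\cdot\otilde(m) = \otilde(m)$ time.

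I would then verify the three properties. The containment $S_i \subseteq V_i$ is immediate. For the edge count, the edges of $G$ not lying in any $E[S_i]$ are exactly (i) the crossing edges removed across all case-(a) steps and (ii) for each final part, the edges incident to $V_i\setminus T_i$. The total of (i) is at most $\sum_{\text{levels}}\sum_{\text{pieces }P}\phi\cdot\mathrm{vol}(P) \leq 2\phi m D$, and the total of (ii) is at most $\sum_i \mathrm{vol}(V_i\setminus T_i) \leq o(1)\cdot\sum_i \mathrm{vol}_G(V_i) = o(m)$ by the volume guarantee of case (b); choosing $\phi$ a small enough constant over $D$ makes the sum below $m/2$, i.e.\ $\sum_i |E[S_i]| > m/2$. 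The conductance property follows directly from case (b): $G(T_i)$ has conductance $\Omega(\phi/\mathrm{polylog}(n))$, which is $\Omega(1/\log^2 n)$ once $\phi$ and the primitive's polylog losses are pinned down.

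The step I expect to be the main obstacle is calibrating these interlocking polylogarithmic factors: the balance the near-linear-time primitive can guarantee fixes the recursion depth $D$; $D$ and $\phi$ together bound the number of cut edges; and $\phi$ together with the primitive's analysis (Cheeger- versus flow-based) fixes the core conductance. Threading all three so that simultaneously fewer than $m/2$ edges are cut and every core has conductance $\Omega(1/\log^2 n)$ is the delicate part --- and it is exactly why the statement settles for ``there exists $T_i \subseteq V_i$ with $G(T_i)$ an expander'' rather than ``$G(V_i)$ is an expander'': the near-linear-time primitive only pseudo-approximates sparsest cut and may decline to split a piece that still admits a sparse but badly unbalanced cut, leaving a small non-expanding fringe $V_i\setminus T_i$ that one cannot afford to keep recursing on.
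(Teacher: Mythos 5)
First, a point of reference: the paper does not prove this lemma at all --- it is quoted as Lemma 32 of \cite{KLOS} (itself a packaging of the Spielman--Teng \texttt{Partition}-based decomposition) and used strictly as a black box. So there is no in-paper argument to match your sketch against; what I can assess is whether your sketch faithfully reconstructs the known proof. Your high-level blueprint --- recurse with a near-linear-time sparse-cut primitive, charge the removed edges to $\phi \cdot \mathrm{vol} \cdot \mathrm{depth}$, and terminate a piece when the primitive declines to split it --- is indeed the right one.

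The genuine gap is in your case (b): you assume the primitive \emph{returns} the expander core $T$ with $\mathrm{vol}(T) \geq (1-o(1))\mathrm{vol}(G')$, and you then set $S_i := T_i$. No known $\otilde(m)$-time primitive delivers this. Spielman--Teng's \texttt{Partition} in the terminating case returns only the small cut-away set $D$, and its guarantee about the remainder is purely \emph{existential}: there exists some $W$ with $V(G')\setminus D \subseteq W \subseteq V(G')$ whose induced subgraph has good conductance, but the algorithm has no handle on $W$ and no $(1-o(1))$ volume bound on it. The correct assignment is therefore $S_i := V_i \setminus D$ (the known large side) and $T_i :=$ the existentially guaranteed superset --- which is exactly why the lemma is forced into the three-level nesting $S_i \subseteq T_i \subseteq V_i$ rather than the cleaner statement your hypothetical primitive would permit. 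Your closing paragraph diagnoses this correctly, but it contradicts the construction in your second paragraph rather than repairing it; as written, the proof relies on an oracle stronger than what exists. With the substitution made, the edge accounting must also be redone: the edges lost at a terminal piece are those incident to the \emph{known} set $D$ (bounded via the balance threshold of \texttt{Partition}), not the edges incident to $V_i \setminus T_i$, and making the two loss terms sum to under $m/2$ while keeping the core conductance at $\Omega(1/\log^2 n)$ is precisely the quantitative content of \cite{ST} and \cite{KLOS} that your sketch defers.
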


This partitioning scheme above creates sets $S_i$ where each $S_i$ is inside a good expander $T_i$, the $T_i$ are disjoint, and the $S_i$ contain half the edges in the graph. With this, we can take a graph $H$, apply the theorem to get $S_i$ lying inside expanders in $H$, and then recurse on the graph of edges which cross between different $S_i$. Since half the volume lies inside the $S_i$ at every step, this scheme generates $O(\log n)$ levels $H_1, H_2, ... H_k$ which partition the edges of $H$. Further, in a given level $H_i$ every connected component lies inside some expander of conductance $\Omega(1/\log^2 n)$ in $\cup_{j \leq k} H_j$. 

However, the above partitioning scheme only works on unweighted graphs. We can get around this in the polynomially-bounded integer weights case with standard tricks as in \cite{ST}. Assume the maximum weight of an edge in our graph $G$ is $M$. Define the unweighted graphs $G_i$, $i = 1, 2, ... \lfloor{\log M}\rfloor + 1$ as follows: $G_i$ is on the same vertices as $G$, and $u$ and $v$ are connected in $G$ if the $i^{th}$ bit of the binary expansion of the weight of $(u,v) \in E(G)$ is 1. Although each $G^i$ is unweighted, if we treat every edge in every $G_i$ as having weight $1$ we have
\[
\glap = \sum_{i=1}^{\lfloor{\log M}\rfloor + 1} 2^{i-1}\mathcal{L}_{G_i}.
\]
In addition, we see that all the $G_i$s have $n$ nodes and at most $m$ edges, and since $M$ is polynomially bounded there are at most $O(\log n)$ of them. Sparsifying each of these separately and adding up the estimators we get will then give us an estimator for our original weighted graph $G$.

In light of these considerations, consider Algorithm \ref{alg:partition}. This partitions every $G_i$ into $O(\log n)$ subgraphs $G_{i1}, G_{i2}, ... G_{i O(\log n)}$, where the $G_{ij}$ edge partition $G_i$ and each of $G_{ij}$'s connected components lies in a unique expander in $\cup_{j \leq l} G_{il}$. Note that there are $O(\log^2 n)$ $G_{ij}$. It then separates each of the $G_{ij}$ into connected components, and returns the set of connected components. Now, if $H_{ijk}$ is a connected component of $G_{ij}$ and if $\x_{H_{ijk}}$ is our query vector $\x$ induced on the nodes of $H_{ijk}$, we have
\[
\x^\top \glap \x = \sum_{i=1}^{\lfloor{\log M}\rfloor + 1} 2^{i-1}\sum_{j} \sum_{k = 1}^{\# \text{ connected components of } G_{ij }} \x_{H_{ijk}}^\top  \lap_{H_{ijk}} \x_{H_{ijk}}.
\]
But now, each connected component $H_{ijk}$ is known to lie inside some expander subgraph of $\cup_{j \leq k} G_{ij}$. With this, we can apply Lemma \ref{lemma:expandersketch} and sketch every one of these connected components, and then summing the contribution of all of the sketches yields something that can approximate $\x^\top \glap \x$.

\begin{algorithm}[t]
	\caption{$\texttt{Split}$($G$)} \label{alg:partition}
	\KwIn{A polynomially-bounded weighted graph $G = (V,E)$}
	\KwOut{An expander partitioning of $G$}
	\textsf{part}($G$) $\gets \emptyset$\;
	Bitbucket the edges of $G$ to form unweighted graphs $G_1, G_2...$ satisfying
	\[
	\glap = \sum_i 2^{i-1} \lap_{G_i}
	\]
	\For{\textsf{each} $G_i$}{
		$j = 1$\;
		\While{$G_i \neq \emptyset$}{
			Apply Lemma \ref{lemma:part} to $G_i$ to get $S_1, S_2, ... $ \;
			$E_{ij} \gets \{(u,v)\ | (u,v) \in G_i, \exists k: u \in S_k, v \in S_k \}$ \;
			$G_{ij} \gets (V_i, E_{ij})$ \;
			$G_{i} \gets G_i \backslash G_{ij}$ \;
			$j \gets j+1$ \;
		}
	}
	\For{\textsf{each} $G_{ij}$}{
		$H_{ij1}, H_{ij2}, ... \gets$ the connected components of $G_{ij}$
	}

	\For{\textsf{each} $H_{ijk}$}{
		Add $(i, H_{ijk})$ to \textsf{part}($G$)
	}
	
	\Return{\textsf{part}($G$)}
\end{algorithm}
\begin{algorithm}[ht]
	\SetAlgoLined
	\KwIn{A polynomially-bounded weighted graph $G$, an error tolerance $\epsilon$}
	\KwOut{\textsf{data}($G$) Data for $\epsilon$-spectral sketch}
	$\mathbb{G} \gets \texttt{Split}(G)$\;
	$\alpha \gets O(\log^{4.5} n / \epsilon)$\;
	$\textsf{data}(G) \gets \emptyset$\;
	\For{\upshape each $(i, H_{ijk}) \in \mathbb{G}$}{
		Add ($i$, $\texttt{SampleSketch}$($H$, $\alpha$)) to $\textsf{data}(G)$
	}
	\Return{\textsf{data}($G$)}
	
	\caption{$\texttt{Sketch}$($G$, $\epsilon$)}
	\label{alg:sketch}
\end{algorithm}

\begin{algorithm}
	\SetAlgoLined
	\KwIn{$\textsf{data}(G)$ a sketch of a graph $G$, $\x$ a query vector}
	\KwOut{an approximation to $\x^\top \glap \x$}
	$r \gets 0$\;
	\For{\upshape each integer-expander sketch pair $(i,\dat(H))$ in \textsf{data}($G$)}{
		$\x_H \gets \x$ induced on the vertices of $H$\;
		$r \gets r + 2^{i-1}\ \texttt{ExpanderEval}(\dat(H),\x_H)$\;
	}
	\Return{r}
	\caption{$\texttt{Eval}$($\textsf{data}(G), \x$)}
	\label{alg:eval}
\end{algorithm}

Consider Algorithms \ref{alg:sketch} and \ref{alg:eval} and define $g(\x) = $ $\texttt{Eval}$($\texttt{Sketch}$($G, \epsilon$),$\x$). We prove the following lemma about $g(\x)$:

\begin{lemma} \label{lemma:epsketch}
	Let $G$ be a weighted undirected graph with $n$ nodes and $m$ edges. Let $\epsilon > 0$ be a parameter, and let $\x$ be a query vector. $g(\x)$ is an unbiased estimator for $\x^\top \glap \x$, and 
	\[
	\frac{\var{g(\x)}}{\epsilon^2 \mathbb{E}[g(\x)]^2} \leq 0.1
	\]
	Further, $g(\x)$ can be stored in $\otilde(n/\epsilon)$ space, and it can be found in $\otilde(m)$ time. A query to $g(\x)$ can be answered in $\otilde(n/\epsilon)$ time. 
\end{lemma}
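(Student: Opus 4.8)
The plan is to assemble the lemma from the per-component guarantee of Lemma~\ref{lemma:expandersketch} together with the structural properties of the partition produced by \texttt{Split}. First I would establish unbiasedness. By construction, \texttt{Eval} returns $r = \sum_{(i,H_{ijk})} 2^{i-1}\,\texttt{ExpanderEval}(\dat(H_{ijk}),\x_{H_{ijk}})$, and Lemma~\ref{lemma:expander1} gives $\mathbb{E}[\texttt{ExpanderEval}(\dat(H_{ijk}),\x_{H_{ijk}})] = \x_{H_{ijk}}^\top \lap_{H_{ijk}} \x_{H_{ijk}}$. Summing over all components and all bit-buckets and using the identity $\glap = \sum_i 2^{i-1}\lap_{G_i}$ together with the edge-partition of each $G_i$ into the $G_{ij}$ and of each $G_{ij}$ into its connected components $H_{ijk}$, linearity of expectation yields $\mathbb{E}[g(\x)] = \x^\top\glap\x$, as claimed. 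Here I would be careful that $\x$ is chosen obliviously of the sketch, so the expectation is genuinely over the algorithm's randomness only.

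Next I would bound the variance. Since the random choices made in different invocations of \texttt{SampleSketch} (different components, different bit-buckets) are mutually independent, $\var{g(\x)} = \sum_{(i,H_{ijk})} 4^{i-1}\var{\texttt{ExpanderEval}(\dat(H_{ijk}),\x_{H_{ijk}})}$. For each component $H_{ijk}$, Lemma~\ref{lemma:expander1} (applied with the unknown expander $T = G_i(T_{ij})$, which has conductance $h = \Omega(1/\log^2 n)$ and contains $H_{ijk}$ as an induced subgraph, and with $\z$ the restriction of $\x$ to $V(T_{ij})$ extended appropriately) gives
\[
\var{\texttt{ExpanderEval}(\dat(H_{ijk}),\x_{H_{ijk}})} \leq 4\alpha^{-2}h^{-4}\bigl(\z^\top \lap_{T_{ij}}\z\bigr)^2.
\]
With $\alpha = \Theta(\log^{4.5} n/\epsilon)$ and $h^{-4} = O(\log^8 n)$, the prefactor $4\alpha^{-2}h^{-4}$ is $O(\epsilon^2/\log n)$; I would tune the constant in $\alpha$ so this is at most $(\epsilon^2/C)\cdot(1/\log n)$ for a suitably large constant $C$ absorbing the number of levels.

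The crux of the argument — and the step I expect to be the main obstacle — is controlling $\sum 4^{i-1}(\z^\top\lap_{T_{ij}}\z)^2$ against $(\x^\top\glap\x)^2$. The key observations are: (i) the $T_{ij}$ for a fixed $i$ are vertex-disjoint (this is the disjointness guarantee of Lemma~\ref{lemma:part}), so $\sum_j \z^\top\lap_{T_{ij}}\z \leq \x^\top\lap_{G_i}\x$ — wait, this is not immediate since $T_{ij}$ is a subgraph of $G_i$ induced on $V_{ij}$ and its edges are a subset of $E(G_i)$, hence $\lap_{T_{ij}}\preceq$ the Laplacian of $G_i$ restricted to those vertices, and disjointness of the $V_{ij}$ over $j$ gives $\sum_j \z^\top\lap_{T_{ij}}\z \leq \x^\top\lap_{G_i}\x$; (ii) using $a^2 + b^2 \leq (a+b)^2$ for nonnegative $a,b$, we get $\sum_j(\z^\top\lap_{T_{ij}}\z)^2 \leq (\x^\top\lap_{G_i}\x)^2$; (iii) summing over the $O(\log M) = O(\log n)$ bit-buckets and applying Cauchy–Schwarz (or again $\sum a_i^2 \le (\sum a_i)^2$) together with $\x^\top\glap\x = \sum_i 2^{i-1}\x^\top\lap_{G_i}\x \geq 2^{i-1}\x^\top\lap_{G_i}\x$ for each $i$, we obtain $\sum_i 4^{i-1}(\x^\top\lap_{G_i}\x)^2 \leq (\x^\top\glap\x)^2$. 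I also need to handle the fact that each $G_i$ is itself split into $O(\log n)$ levels of $G_{ij}$'s by the recursive application of Lemma~\ref{lemma:part}; here the $T_{ij}$ across different levels $j$ need not be disjoint, so I would pay an extra $O(\log n)$ factor from a union/Cauchy–Schwarz step over the at most $O(\log n)$ levels, which is exactly why $\alpha$ carries a $\log^{4.5} n$ (i.e.\ $\log^{0.5}n$ for Cheeger's $h^{-4}=\log^8n$ is $\log^4$, plus a $\log^{0.5}$, plus $\log$ from levels and buckets — the bookkeeping of these logarithmic factors is the delicate part). Chaining these inequalities gives $\var{g(\x)} \leq O(\epsilon^2/\log n)\cdot O(\log n)\cdot(\x^\top\glap\x)^2 \leq 0.1\,\epsilon^2\,\mathbb{E}[g(\x)]^2$ after fixing constants.

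Finally, the space, construction-time, and query-time bounds follow by summing the per-component bounds from Lemma~\ref{lemma:expander2}: there are $O(\log^2 n)$ graphs $G_{ij}$, their connected components $H_{ijk}$ partition the edges of the corresponding $G_i$ (so total edges across all components in a fixed bit-bucket is at most $m$, and $O(\log n)$ bit-buckets gives $\otilde(m)$ total), and each component contributes $\otilde(n_{ijk}\,\alpha)$ space and query time where $\sum_k n_{ijk} \le$ (number of components of $G_{ij}$)$\cdot$... — more carefully, within one $G_{ij}$ the components are vertex-disjoint so $\sum_k n_{ijk}\le n$, and summing over the $O(\log^2 n)$ choices of $(i,j)$ gives $\otilde(n\alpha) = \otilde(n/\epsilon)$ total space and query time, while \texttt{Split} runs in $\otilde(m)$ by Lemma~\ref{lemma:part} and \texttt{SampleSketch} on each component is $\otilde(\,|E(H_{ijk})|\,)$, summing to $\otilde(m)$. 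This completes the proof.
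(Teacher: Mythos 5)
Your proposal is correct and follows essentially the same route as the paper's proof: unbiasedness by linearity over the components, independence of the per-component sketches to push the variance inside the sum, vertex-disjointness of the expanders $T_{ijk}$ within a fixed level $(i,j)$ together with $\sum_k a_k^2 \leq (\sum_k a_k)^2$ to charge the variance to $(\x^\top \lap_{G_i}\x)^2$, an extra $O(\log n)$ factor for the recursion levels, and finally $\sum_i 4^{i-1}(\x^\top\lap_{G_i}\x)^2 \leq (\x^\top\glap\x)^2$ to compare against $\mathbb{E}[g(\x)]^2$. The only blemishes are notational (you reuse $j$ for both the component index within one application of Lemma~\ref{lemma:part} and the recursion-level index), not substantive.
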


Before we prove this result, we first explain how Lemma \ref{lemma:epsketch} would imply Theorem \ref{thm:main}.

\begin{proof}[Proof of Theorem \ref{thm:main}]
Assume that Lemma \ref{lemma:epsketch} held. Note that $g(\x)$ satisfies the desired bounds on constructing time, storage space, and query time. We now show $g(\x)$ satisfies the approximation guarantee $(1-\epsilon) \x^\top \glap \x \leq g(\x) \leq (1+\epsilon) \x^\top \glap \x$ with probability $0.9$. By Chebyshev's inequality we have
\[
\mathbb{P}\left(|g(\x) - \x^\top \glap \x| > \epsilon \cdot \x^\top \glap \x\right) 
= \mathbb{P}\left(|g(\x) - \mathbb{E}[f(\x)]| 
> \epsilon \cdot \mathbb{E}[g(\x)]\right) 
< \frac{\var{g(\x)}}{\epsilon^2 \mathbb{E}[g(\x)]^2}.
\]
Since we know the term on the right is at most $0.1$ by the assumed lemma, our function $g(\x)$ satisfies $(1-\epsilon) \x^\top \glap \x \leq g(\x) \leq (1+\epsilon) \x^\top \glap \x$ with probability $0.9$, as claimed. 
\end{proof}

Consequently, all that remains to complete this section is to prove Lemma \ref{lemma:epsketch}.

\begin{proof}[Proof of Lemma \ref{lemma:epsketch}] We will prove the claims in order. For notational clarity, define $\lap^{ijk} = \lap_{H_{ijk}}$ and $\x_{ijk} = \x_{H_{ijk}}$. We start with unbiasedness. Let $\alpha = c \log^{4.5} n / \epsilon$, for some constant $c$ defined later. Note that in the end 
\[
g(\x) = \sum_{i=1}^{\lfloor{\log M}\rfloor + 1} 2^{i-1}\sum_{j} \sum_{k = 1}^{\# \text{ connected components of } G_{ij }}  f_{ijk}(\x^{ijk}).
\]
where $f_{ijk}(\x_{ijk}) =$ $\texttt{ExpanderEval}$($\texttt{SampleSketch}$($H_{ijk}, \alpha$), $\x_{ijk})$ is just the algorithm from Lemma \ref{lemma:expandersketch}. Since $f_{ijk}$ is unbiased, $\mathbb{E}[f_{ijk}(\x_{ijk})] = \x_{ijk}^\top \lap^{ijk}\x_{ijk}$, and
\[
\mathbb{E}[g(\x)] = \sum_{i=1}^{\lfloor{\log M}\rfloor + 1} 2^{i-1}\sum_{j} \sum_{k}  \x_{ijk}^\top \lap^{ijk}\x_{ijk} = \x^\top \glap \x
\]
as desired. We now turn our attention to the inequality. Note by the definition of the partitioning, each $H_{ijk}$ lies inside a subgraph $T_{ijk}$ of  $\cup_{j \leq l} G_{il}$ with conductance $\Omega(\frac{1}{\log^2 n})$, and further none of $T_{ij1}, T_{ij2}, ... $ overlap. Define $\z_{ijk}$ as $\x$ induced on the nodes of $T_{ijk}$. As $\x_{ijk}$ and $\z_{ijk}$ obviously overlap on the nodes in $H_{ijk}$, by Lemma \ref{lemma:expandersketch}, we can then say
\[
\var{f_{ijk}(\x_{ijk})} \leq O(\alpha^{-2}\log^{8} n )(\z_{ijk}^\top \lap_{T_{ijk}}\z_{ijk})^2.
\]
As the $f_{ijk}$ are all independently generated, bringing the variance inside the sum and applying our bound gives
\[
\var{g(\x)} \leq \sum_{i=1}^{\lfloor{\log M}\rfloor + 1} 4^{i-1}\sum_{j} \sum_{k} O(\alpha^{-2}\log^{8} n )(\z_{ijk}^\top \lap_{T_{ijk}} \z_{ijk})^2.
\]
Now, note that the $T_{ijk}$ are subgraphs of $\cup_{j \leq l} G_{il} \subseteq G_i$, and further for fixed $i$ and $j$ the $T_{ijk}$ do not overlap. As such, the vectors $\z_{ijk}$ appear as distinct subvectors of $\x$, and we can write
\[
\sum_{k} (\z_{ijk}^\top \lap_{T_{ijk}} \z_{ijk})^2 \leq (\sum_{k} \z_{ijk}^\top \lap_{T_{ijk}} \z_{ijk})^2 \leq (\x^\top \lap_{G_i} \x)^2.
\]
Plugging this in gives 
\[
\var{g(\x)} \leq \sum_{i=1}^{\lfloor{\log M}\rfloor + 1} 4^{i-1}\sum_{j}  O(\alpha^{-2}\log^{8} n )(\x^\top \lap_{G_i}\x)^2.
\]
As there are $O(\log n)$ distinct values of $j$ in the inner sum, we also have
\[
\var{g(\x)} \leq \sum_{i=1}^{\lfloor{\log M}\rfloor + 1} 4^{i-1} O(\alpha^{-2} \log^9 n)(\y^\top  \lap_{G_i} \y)^2 = \sum_{i=1}^{\lfloor{\log M}\rfloor + 1} 4^{i-1} O(c^{-2} \epsilon^{2})(\y^\top \lap_{G_i} \y)^2.
\]
Obeserve 
\[
\mathbb{E}[g(\x)] =  \x^\top  \glap \x = \sum_{i=1}^{\lfloor{\log M}\rfloor + 1} 2^{i-1} \x^\top \lap_{G_i} \x.
\]
With this, we can say
\[
\mathbb{E}[g(\x)]^2 \geq \sum_{i=1}^{\lfloor{\log M}\rfloor + 1} 4^{i-1} (\x^\top \lap_{G_i} \x)^2
\]
and so
\[
\frac{\var{g(\x)}}{\epsilon^2 \mathbb{E}[g(\x)]^2} \leq \frac{\sum_{i=1}^{\lfloor{\log M}\rfloor + 1} 4^{i-1} O(c^{-2})(\x^\top  \lap_{G_i} \x)^2}{\sum_{i=1}^{\lfloor{\log M}\rfloor + 1} 4^{i-1}(\x^\top \lap_{G_i} \x)^2} = \frac{O(1)}{c^2}.
\]
Choosing $c$ appropriately implies the result.

We now look at the amount of space needed to store $g(\x)$. Since we store one Lemma \ref{lemma:expandersketch} sketch per $H_{ijk}$, we will use $\otilde(|H_{ijk}|/ \epsilon)$ space per $H_{ijk}$. Now, for fixed $i$ and $j$, $\sum_k |H_{ijk}| \leq n$. Therefore, storing all of the sketches for fixed $i$ and $j$ costs $\otilde(n/\epsilon)$ space. Finally, as there are $O(\log^2 n)$ values for $(i,j)$, the space requirement follows.

We next look at the running time of this procedure. Note that we can recursively compute the partitioning from Algorithm \ref{alg:partition} in $\otilde(m)$ time. Further, by Lemma \ref{lemma:expandersketch} we will take $\otilde(|E(H_{ijk})|$ time to sketch each $H_{ijk}$. Since $\sum_k |E(H_{ijk})| \leq m$, and since there are at most $O(\log^2 n)$ values for $(i,j)$, this claim follows as well.

Finally, we look at the query time of the functon $f$ we computed. Similar to the space bound proven above, we see that by Lemma \ref{lemma:expandersketch} inducing over the proper nodes and querying the stored sketches for fixed $i$ and $j$ takes $\otilde(n/\epsilon)$ time. Since there are $O(\log^2 n)$ different values for $(i,j)$, querying the entire sketch takes $\otilde(n/\epsilon)$ time as well.
\end{proof}

\section{Sketching the Pseudoinverse}
\label{sec:pseudo}

In Section~\ref{sec:lap_sketch}, we provided an algorithm which computes an $\epsilon$-spectral sketch for a Laplacian using $\otilde(n/\epsilon)$ space. 
%
In this section we show that this machinery can also be used to achieve a similar result for sketching the pseudoinverse and prove Theorem~\ref{thm:pinv}, restated below. 

\pinv*

We prove this result by providing a much more general reduction.  Informally we show that to compute an $\epsilon$-spectral sketch for the pseudoinverse it suffices both a standard $\sqrt{\epsilon}$-spectral approximation and a $\epsilon$-spectral sketch. Since $\sqrt{\epsilon}$-sparsifiers for the Laplacian and its pseudoinverse can be stored in $\otilde(n/\epsilon)$ space, this yields our desired result. Formally, we show the following.

\begin{theorem}[Pseudoinverse Sketch Reduction] \label{thm:general}
Let $A \in \R^{n \times n}$ be a symmetric positive semidefinite (PSD) matrix, let $\bv \in \Range(A)$, let  $S \in \R^{n \times n}$ satisfy $(1 + \sqrt{\epsilon})^{-1} A^\pseudo \preceq S \preceq (1 + \sqrt{\epsilon}) A^\pseudo$ for $\epsilon \leq 1/16$, and let $\y = S \bv$. If $f(\y)$ is some value satisfying  $(1 - \epsilon) \y^\top A \y \leq f(\y) \leq (1 + \epsilon) \y^\top A \y$, then 
\[
|2\bv^\top \y - f(\y) -  \bv^\top A^\pseudo \bv| \leq 4\epsilon \cdot \bv^\top A^\pseudo \bv
\]
\end{theorem}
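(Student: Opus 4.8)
The plan is to express everything in terms of the "true" answer $r := \bv^\top A^\pseudo \bv$ and a small error quantity, then bound the error. The key algebraic identity to exploit is that $2\bv^\top \y - \y^\top A \y$ would equal exactly $\bv^\top A^\pseudo \bv$ if $\y$ were exactly $A^\pseudo \bv$; indeed with $\y^* := A^\pseudo \bv$ we have $2\bv^\top \y^* - (\y^*)^\top A \y^* = 2\bv^\top A^\pseudo \bv - \bv^\top A^\pseudo A A^\pseudo \bv = \bv^\top A^\pseudo \bv$ since $\bv \in \Range(A)$. So the natural decomposition of the error is into (i) the error from using $\y = S\bv$ instead of $\y^* = A^\pseudo \bv$ in the expression $2\bv^\top \y - \y^\top A \y$, and (ii) the error from $f(\y)$ approximating $\y^\top A \y$ only up to a $(1\pm\epsilon)$ factor. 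I would write $2\bv^\top \y - f(\y) - r = \big(2\bv^\top\y - \y^\top A\y - r\big) + \big(\y^\top A\y - f(\y)\big)$ and bound each piece.

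**Bounding piece (i).** For the first piece, I would rewrite $2\bv^\top\y - \y^\top A\y - r$ by completing the square relative to $\y^* = A^\pseudo \bv$. Since $\bv = A\y^*$ (again using $\bv \in \Range(A)$), we get $2\bv^\top \y - \y^\top A \y = 2 (\y^*)^\top A \y - \y^\top A \y = (\y^*)^\top A \y^* - (\y - \y^*)^\top A (\y - \y^*)$, hence $2\bv^\top\y - \y^\top A\y - r = -(\y-\y^*)^\top A(\y-\y^*) \le 0$, and its magnitude is $(\y-\y^*)^\top A(\y-\y^*)$. Now I need to bound this in terms of $r$ using the spectral approximation $(1+\sqrt\epsilon)^{-1} A^\pseudo \preceq S \preceq (1+\sqrt\epsilon) A^\pseudo$. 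Writing $\y - \y^* = (S - A^\pseudo)\bv$ and working in the basis where $A$ is diagonal (restricted to $\Range(A)$), the hypothesis gives $\|(S - A^\pseudo)\bv\|$ small in the $A$-norm: concretely $(\y-\y^*)^\top A (\y - \y^*) = \bv^\top (S - A^\pseudo) A (S - A^\pseudo) \bv$, and since $S - A^\pseudo$ is "small" relative to $A^\pseudo$ — the operator $A^{1/2} S A^{1/2}$ (on $\Range(A)$) is within $(1\pm\sqrt\epsilon)$ of the identity — this quantity is at most $\sqrt\epsilon^2/(1-\text{something})$ times $\bv^\top A^\pseudo \bv$, i.e. $O(\epsilon) \cdot r$. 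I'd make the constant explicit: something like $(\y-\y^*)^\top A(\y-\y^*) \le \frac{\epsilon}{(1+\sqrt\epsilon)^2}\cdot$ wait, more carefully, $\le \max\{(1+\sqrt\epsilon - 1)^2, (1 - (1+\sqrt\epsilon)^{-1})^2\} r \le \epsilon \cdot r$ since $\sqrt\epsilon \le 1/4$ makes both bracketed terms at most $\sqrt\epsilon$ in absolute value.

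**Bounding piece (ii) and combining.** For the second piece, $|\y^\top A \y - f(\y)| \le \epsilon \, \y^\top A \y$ by hypothesis, so I need to bound $\y^\top A \y$ in terms of $r$. From the spectral approximation, $\y^\top A \y = \bv^\top S A S \bv \le (1+\sqrt\epsilon)^2 \bv^\top A^\pseudo A A^\pseudo \bv = (1+\sqrt\epsilon)^2 r \le 2r$ for $\epsilon \le 1/16$ (since $(1+1/4)^2 = 25/16 \le 2$). Hence piece (ii) is at most $2\epsilon r$. Adding: the total error is at most $\epsilon r + 2\epsilon r = 3\epsilon r \le 4\epsilon r$, which is even a bit stronger than claimed, so there is slack to absorb any looseness in the constants above. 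The main obstacle — really the only nontrivial step — is carefully carrying out the operator-norm comparison in piece (i): one must be careful that $S$ need not share eigenvectors with $A$, so the right way is to sandwich $A^{1/2} S A^{1/2}$ (interpreted on $\Range(A)$, where $A^{1/2}(A^\pseudo)A^{1/2}$ is the identity) between $(1+\sqrt\epsilon)^{-1} I$ and $(1+\sqrt\epsilon) I$, and then observe that $(\y - \y^*)^\top A (\y-\y^*) = \|(A^{1/2} S A^{1/2} - I) A^{-1/2}\bv\|_2^2$ where $A^{-1/2}\bv := (A^\pseudo)^{1/2}\bv$, so it is controlled by $\|A^{1/2}SA^{1/2} - I\|_2^2 \cdot \|(A^\pseudo)^{1/2}\bv\|_2^2 = \|A^{1/2}SA^{1/2}-I\|^2 \cdot r$. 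Everything else is routine arithmetic with the constants.
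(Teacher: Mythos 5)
Your proof is correct and follows essentially the same route as the paper's: the same decomposition into (i) the error from using $\y = S\bv$ in place of $A^\pseudo\bv$ inside $2\bv^\top\y - \y^\top A\y$, identified via completing the square as $-(\y-\y^*)^\top A(\y-\y^*)$, plus (ii) the error $|f(\y)-\y^\top A\y|$, with both pieces controlled by conjugating the spectral sandwich by $A^{\pm 1/2}$. The only differences are cosmetic — the paper phrases the hypothesis through $A' = S^\pseudo$ and bounds piece (i) by $\epsilon\,\y^\top A\y \le 2\epsilon\,\bv^\top A^\pseudo\bv$ whereas you bound it directly by $\epsilon\,\bv^\top A^\pseudo\bv$, giving the marginally sharper total $3\epsilon$.
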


Note that $(1 + \sqrt{\epsilon})^{-1} A^\pseudo \preceq S \preceq (1 + \sqrt{\epsilon}) A^\pseudo$ if and only if $(1 + \sqrt{\epsilon}) A \succeq S^\pseudo \succeq (1 + \sqrt{\epsilon})^{-1} A$, i.e. $S$ is a $\sqrt{\epsilon}$-spectral approximation to $A^\pseudo$ if and only if $S^\pseudo$ is a $\sqrt{\epsilon}$-spectral approximation to $A$. Consequently, this theorem implies that a $\sqrt{\epsilon}$-sparsifier for $A$ and an $\epsilon$-sketch for $A$ together constitute a $4\epsilon$-spectral sketch for $A^\pseudo$. Moreover, the cost of evaluating this sketch of $A^\pseudo$ is essentially the cost of evaluating the sketch of $A$ and solving a linear system in the sparsifier of $A$. 

The intuition behind our approach is as follows. Given a positive semidefinite matrix $A$ and $\bv \in \Range(A)$, consider the function $q_{A,\bv}(\z) = 2\bv^\top \z - \z^\top A \z$. Maximizing this objective function in $\z$ is a standard problem in numerical linear algebra; it is the objective function steepest descent and conjugate gradient seek to maximize to solve the linear system $A \x = \bv$ \cite{golub}. It is not hard to show (see Lemma~\ref{lemma:qfacts}) that the maximizer of this function is  $\x = A^\pseudo \bv$, and it achieves objective value $\bv^\top A^\pseudo \bv$, exactly the quantity we want to approximate. Consequently, to obtain a sketch of $A^\pseudo$ we simply need to maintain the ability to approximately maximize $q_{A,\bv}(\z)$ for any input $\bv$. 

To prove Theorem~\ref{thm:general} we show how to approximate $q_{A,\bv}(\x)$, i.e. approximately maximize $q_{A,\bv}$, through two approximations. First, rather than actually maximizing  $q_{A,\bv}$, i.e. solving $A \x = \bv$ and querying $q_{A,\bv}(\x)$, we instead find $\y = S \bv$, where $S$ is $\sqrt{\epsilon}$-spectrally close to $A^\pseudo$, and query $q_{A,\bv}(\y)$. In other words, we solve $A' \y = \bv$ where $A' = S^\pseudo$ is a $\sqrt{\epsilon}$ approximation to $A$.  The second approximation, is that we replace the quadratic form $ \y^\top A \y$ inside $q_{A,\bv}(\y)$ with $f(\y)$ which is $\epsilon$-close to $\y^\top A \y$. We show that the error incurred by each of these substitutions is small simply by carefully analyzing $q_{A,\bv}$ (see Lemma~\ref{lemma:qfacts} and Lemma~\ref{lemma:approxfact}) and combining these bounds to prove Theorem~\ref{thm:general}.

To actually prove our theorem for Laplacians, i.e. use Theorem~\ref{thm:general} to prove Theorem~\ref{thm:pinv}, we show that the required inputs to Theorem~\ref{thm:general} can be computed, stored, and applied in the required time and space. This follows directly from the extensive literature on Laplacian system solving and graph sparsification (though it is implicit in the work and some care is needed to obtain the desired form). 

We begin with some useful properties of the function $q_{A,\bv}(\x)$. Some of these properties are well known, but we include their proof for completeness.

\begin{lemma}[Basic Facts about $q_{A, \bv}$] \label{lemma:qfacts}
Let $A \in \R^{n \times n}$ be a symmetric (PSD) matrix, let $\bv \in \Range(A)$, and define $q_{A,\bv}(\y) = 2\bv^\top \y - \y^\top A \y$ for all $\y$. Then if $\x \in \R^n$ satisfies $A \x = \bv$ we have 
\[
q_{A,\bv}(\x) - q_{A,\bv}(\y) = (\x - \y)^\top  A (\x - \y).
\] 
for all $\y \in \R^{n}$. Further, $\x$ maximizes $q_{A, \bv}$ and $q_{A, \bv}(\x) = \bv^\top A^\pseudo \bv$. 
\end{lemma}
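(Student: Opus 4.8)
The plan is to prove the identity first and then read off the two consequences. Since $\bv \in \Range(A)$, any solution $\x$ to $A\x = \bv$ exists (for instance $\x = A^{\pseudo}\bv$ works, but we only need an arbitrary solution for the identity). I would start by expanding $q_{A,\bv}(\x) - q_{A,\bv}(\y) = 2\bv^\top(\x - \y) - \x^\top A\x + \y^\top A\y$, and then substitute $\bv = A\x$ in the first term to get $2\x^\top A(\x - \y) - \x^\top A\x + \y^\top A\y$. Collecting terms gives $\x^\top A\x - 2\x^\top A\y + \y^\top A\y$, which is exactly $(\x - \y)^\top A(\x - \y)$ by symmetry of $A$. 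This is a routine computation and I do not expect it to be an obstacle.

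For the second claim, that $\x$ maximizes $q_{A,\bv}$: since $A$ is PSD, $(\x - \y)^\top A(\x - \y) \geq 0$ for all $\y$, so the identity immediately gives $q_{A,\bv}(\x) \geq q_{A,\bv}(\y)$ for all $\y$. Hence $\x$ is a maximizer. (Note that the maximizer need not be unique, but the maximum value is well-defined, and in particular any two solutions of $A\x = \bv$ give the same objective value, as the identity with $\y$ a second solution shows the difference lies in $(\x-\y)^\top A(\x-\y)$ where $\x - \y \in \ker A$.)

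For the final claim, that $q_{A,\bv}(\x) = \bv^\top A^{\pseudo}\bv$: here I would use the specific solution $\x = A^{\pseudo}\bv$, which indeed satisfies $A\x = AA^{\pseudo}\bv = \bv$ because $\bv \in \Range(A)$ and $AA^{\pseudo}$ is the orthogonal projection onto $\Range(A)$. Then $q_{A,\bv}(A^{\pseudo}\bv) = 2\bv^\top A^{\pseudo}\bv - (A^{\pseudo}\bv)^\top A (A^{\pseudo}\bv)$. Using $A^{\pseudo}AA^{\pseudo} = A^{\pseudo}$ and symmetry of $A^{\pseudo}$, the quadratic term equals $\bv^\top A^{\pseudo} A A^{\pseudo}\bv = \bv^\top A^{\pseudo}\bv$, so the whole expression collapses to $2\bv^\top A^{\pseudo}\bv - \bv^\top A^{\pseudo}\bv = \bv^\top A^{\pseudo}\bv$. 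Since the maximum value is independent of which solution $\x$ we picked, this value $\bv^\top A^{\pseudo}\bv$ is the value of $q_{A,\bv}$ at every maximizer.

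None of the three steps presents a real obstacle; the only points requiring a modicum of care are invoking $\bv \in \Range(A)$ to ensure $A\x = \bv$ is solvable (and that $AA^{\pseudo}\bv = \bv$), and using the pseudoinverse identities $A^{\pseudo}AA^{\pseudo} = A^{\pseudo}$ and the symmetry of $A^{\pseudo}$ in the last computation. I would organize the write-up as: (1) algebraic derivation of the identity, (2) the one-line maximality argument from PSD-ness, (3) the evaluation at $\x = A^{\pseudo}\bv$.
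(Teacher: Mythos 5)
Your proof is correct and follows essentially the same route as the paper's: expand the difference, substitute $\bv = A\x$, complete the square, and invoke positive semidefiniteness for maximality. If anything, you are slightly more careful than the paper on the last claim, since you explicitly justify that every solution of $A\x = \bv$ attains the same value $\bv^\top A^{\pseudo}\bv$, whereas the paper's chain of equalities is written as though $\x = A^{\pseudo}\bv$.
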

\begin{proof}
First, we see by direct calculation
\[
q_{A, \bv}(\x) = 2\bv^\top \x - \x^\top A \x = 2\x^\top A \x - \x^\top A \x =\x^\top A \x = \bv^\top A^\pseudo A A^\pseudo \bv = \bv^\top A^\pseudo \bv ~.
\]
yielding the final claim. Next, we see by direct calculation
\[
q_{A,\bv}(\x) - q_{A,\bv}(\y) 
=  \y^\top A \y - \x^\top A \x  + 2 \bv^\top(\x - \y) ~.
\]
However, since $A \x = \bv$ this implies
\[
q_{A,\bv}(\x) - q_{A,\bv}(\y) =  \y^\top A \y + \x^\top A \x - 2 \x^\top A \y = (\x - \y)^\top  A (\x - \y)
\]
yielding the first result. To see that $\x$ maximizes $q_{A, \bv}$, note that for any $\y$ $(\x - \y)^\top  A (\x - \y)$ is nonnegative by since $A$ is PSD. Therefore for any vector $\y$ we have $q_{A,\bv}(\x) \geq q_{A,\bv}(\y)$. 
\end{proof}

In the above lemma, the difference $q_{A,\bv}(\x) - q_{A,\bv}(\y)$ is a quadratic form in $A$. We show that this term is well behaved for $\y$ derived from matrices spectrally close to $A$. Again, this is analogous to previous work on sampling algorithms for regression.

\begin{lemma}[Error from Solving an Approximate System]
\label{lemma:approxfact}
Let $A, A' \in \R^{n \times n}$ be symmetric positive semidefinite matrices, let $\bv \in \Range(A)$, and define $q_{A,\bv}(\z) = 2\bv^\top \z - \z^\top A \z$ for all $\z \in \R^n$. If for $\x,\y \in \R^n$ we have $A \x = A' \y = \bv$ and $(1 + \sqrt{\epsilon})^{-1} A \preceq A' \preceq (1 + \sqrt{\epsilon}) A$ for $\epsilon \leq 1/16$, we have
\[
\y^\top A \y \leq 2 q_{A, \bv}(\x)
~ \text{ and } ~
(\x - \y)^\top A (\x - \y) \leq 2 \epsilon q_{A, \bv}(\x)
~. 
\]
\end{lemma}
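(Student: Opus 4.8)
The plan is to reduce both inequalities to a single bound on $\|\x-\y\|_A$, where $\|\z\|_A := (\z^\top A \z)^{1/2}$ denotes the $A$-seminorm. First I would record that, by Lemma~\ref{lemma:qfacts}, $q_{A,\bv}(\x) = \x^\top A \x = \|\x\|_A^2$ since $A\x = \bv$; so the two claims become $\|\y\|_A^2 \le 2\|\x\|_A^2$ and $\|\x-\y\|_A^2 \le 2\epsilon\|\x\|_A^2$. Since $\|\z\|_A = \|A^{1/2}\z\|_2$, this seminorm satisfies the triangle inequality, which will be essentially the only ``global'' tool needed. (The Loewner sandwich also forces $\ker A = \ker A'$, so $\bv \in \Range(A) = \Range(A')$ and $A'\y = \bv$ is indeed solvable.)

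The heart of the argument is the identity $A(\x-\y) = A\x - A\y = \bv - A\y = A'\y - A\y = (A'-A)\y$, which uses only $A\x = A'\y = \bv$. Consequently
\[
\|\x-\y\|_A^2 = (\x-\y)^\top A (\x-\y) = (\x-\y)^\top (A'-A)\y .
\]
Next I would invoke a generalized Cauchy--Schwarz inequality: the hypothesis $(1+\sqrt{\epsilon})^{-1} A \preceq A' \preceq (1+\sqrt{\epsilon}) A$ yields $-\sqrt{\epsilon}\, A \preceq A'-A \preceq \sqrt{\epsilon}\, A$ (using $1 - (1+\sqrt{\epsilon})^{-1} = \sqrt{\epsilon}/(1+\sqrt{\epsilon}) \le \sqrt{\epsilon}$), and any symmetric $B$ with $-cA \preceq B \preceq cA$ obeys $|\z^\top B \w| \le c\,\|\z\|_A \|\w\|_A$ for all $\z,\w$. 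To see this last fact: the sandwich forces $B$ to vanish on $\ker A$, so on $\Range(A)$ one may write $B = A^{1/2} N A^{1/2}$ with $\|N\|_{\mathrm{op}} \le c$, whence $|\z^\top B \w| = |(A^{1/2}\z)^\top N (A^{1/2}\w)| \le c\,\|A^{1/2}\z\|_2 \|A^{1/2}\w\|_2$. Applying this with $c = \sqrt{\epsilon}$ to the displayed identity gives $\|\x-\y\|_A^2 \le \sqrt{\epsilon}\,\|\x-\y\|_A \|\y\|_A$, hence $\|\x-\y\|_A \le \sqrt{\epsilon}\,\|\y\|_A$.

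To finish, I would bound $\|\y\|_A$ in terms of $\|\x\|_A$ via the triangle inequality: $\|\y\|_A \le \|\x\|_A + \|\x-\y\|_A \le \|\x\|_A + \sqrt{\epsilon}\,\|\y\|_A$, so $\|\y\|_A \le (1-\sqrt{\epsilon})^{-1}\|\x\|_A$. Since $\epsilon \le 1/16$ gives $\sqrt{\epsilon} \le 1/4$ and $(1-\sqrt{\epsilon})^{-2} \le 16/9 \le 2$, this delivers $\|\y\|_A^2 \le 2\|\x\|_A^2 = 2 q_{A,\bv}(\x)$ and $\|\x-\y\|_A^2 \le \epsilon (1-\sqrt{\epsilon})^{-2}\|\x\|_A^2 \le 2\epsilon\, q_{A,\bv}(\x)$, as required.

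The only step with any subtlety is the generalized Cauchy--Schwarz for the \emph{indefinite} matrix $A'-A$: one must verify that $A'-A$ is supported on $\Range(A)$ (which the sandwich $-cA \preceq A'-A \preceq cA$ indeed forces, since $\z \in \ker A$ implies $\z^\top(A'-A)\w = 0$ for all $\w$ by the usual optimize-over-$t$ argument) before the factorization $A^{1/2} N A^{1/2}$ is meaningful; everything else is elementary algebra plus two uses of the triangle inequality. As an alternative to this step, one could diagonalize $M := A^{-1/2} A' A^{-1/2}$ on $\Range(A)$, whose eigenvalues lie in $[(1+\sqrt{\epsilon})^{-1},\, 1+\sqrt{\epsilon}]$, and reduce everything to the scalar estimate $(1-\lambda^{-1})^2 \le \epsilon$ on that interval, which in fact yields the sharper constants $\epsilon$ and $(1+\sqrt{\epsilon})^2$ directly.
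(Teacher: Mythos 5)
Your proof is correct, and it reaches the two bounds by a somewhat different route than the paper. For the second inequality both arguments are at heart the same estimate: you both start from the identity $A(\x-\y)=(A'-A)\y$ and exploit $\|A^{-1/2}(A'-A)A^{-1/2}\|_2\le\sqrt{\epsilon}$ — the paper does this by squaring the operator norm after inserting $A^{-1/2}A^{1/2}$, while you phrase it as a generalized Cauchy--Schwarz inequality $|(\x-\y)^\top(A'-A)\y|\le\sqrt{\epsilon}\,\|\x-\y\|_A\|\y\|_A$ and divide, obtaining the cleaner intermediate bound $\|\x-\y\|_A\le\sqrt{\epsilon}\,\|\y\|_A$. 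The real divergence is in the first inequality and in the logical order. The paper proves $\y^\top A\y\le 2q_{A,\bv}(\x)$ first, by writing $q_{A,\bv}(\y)=2\y^\top A'\y-\y^\top A\y\ge(1-2\sqrt{\epsilon})\y^\top A\y$ and invoking the maximality of $q_{A,\bv}(\x)$ from Lemma~\ref{lemma:qfacts}; it then feeds that into the operator-norm bound to get the second claim. You reverse the order: you first get $\|\x-\y\|_A\le\sqrt{\epsilon}\,\|\y\|_A$, then derive $\|\y\|_A\le(1-\sqrt{\epsilon})^{-1}\|\x\|_A$ purely from the triangle inequality in the $A$-seminorm, and finish with $(1-\sqrt{\epsilon})^{-2}\le 16/9\le 2$. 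Your version never uses the optimization-theoretic interpretation of $q_{A,\bv}$ (only the identity $q_{A,\bv}(\x)=\x^\top A\x$), which makes it slightly more self-contained; the paper's version is a little shorter because the maximizer property does the work of your triangle-inequality step. Your care about the kernel issues — that $A'-A$ is supported on $\Range(A)$ before factoring it as $A^{1/2}NA^{1/2}$ — is a point the paper glosses over with the remark that ``the formula does not change if we add to $\y$ anything in the kernel of $A$,'' so your treatment is if anything more rigorous there.
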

\begin{proof}

Note that as $(1 - \sqrt{\epsilon}) A \preceq (1 + \sqrt{\epsilon})^{-1} A \preceq A'$, we have 
\[
q_{A,\bv}(\y) = 2 \bv^\top \y - \y^\top A \y = 2 \y^\top A' \y - \y^\top A \y
\geq \left[2 (1 - \sqrt{\epsilon}) - 1\right] \y^\top A \y
= (1 - 2 \sqrt{\epsilon})  \y^\top A \y ~.
\]
However, since $\x$ is the maximizer of $q_{A,\bv}$ by Lemma \ref{lemma:qfacts}, we have $q_{A,\bv}(\y) \leq q_{A,\bv}(\x)$, and since $1 - 2\sqrt{\epsilon} \geq 1/2$, we have $\y^\top A \y \leq 2 q_{A, \bv}(\x)$ as desired. 

To prove the second claim, let $A^{1/2}$ denote the square root of $A$, the unique positive semidefinite matrix such that $A^{1/2} A^{1/2} = A$, and let $A^{-1/2}$ denote the pseudoinverse of $A^{1/2}$. We have
\begin{align*}
(\x - \y)^\top A (\x - \y) 
&=
\|A^{1/2} (\x - \y)\|_2^2
= \|A^{-1/2} A (\x - \y)\|_2^2
= \|A^{-1/2} (A - A') \y\|_2^2\\
&= \|A^{-1/2} (A - A') A^{-1/2} A^{1/2} \y \|_2^2
\leq \|A^{-1/2} (A - A') A^{-1/2}\|_2^2  \cdot \|A^{1/2} \y \|_2^2
\end{align*}
where in the last line we used that the formula does not change if we add to $y$ anything in the kernel of $A$. Since $(1 + \sqrt{\epsilon})^{-1} A \preceq A' \preceq (1 + \sqrt{\epsilon}) A$, by standard properties of $\preceq$ (see \cite{golub}) we have
\[
- \left( \sqrt{\epsilon}\right) I \preceq - \left(\frac{\sqrt{\epsilon}}{1+\sqrt{\epsilon}}\right) I \preceq A^{-1/2} (A' - A) A^{-1/2} \preceq \left(\sqrt{\epsilon}\right) I
\]
and therefore $ \|A^{-1/2} (A - A') A^{-1/2}\|_2^2 \leq \epsilon $. By applying our above bound on  $\|A^{1/2} \y \|_2^2 = \y^\top A \y$, the result follows. 
\end{proof}
With Lemma~\ref{lemma:qfacts} and Lemma~\ref{lemma:approxfact} in hand, we prove Theorem~\ref{thm:general}:
\begin{proof}[Proof of Theorem~\ref{thm:general}]
Let $A' = S^\pseudo$.  Note that this implies $(1 + \sqrt{\epsilon})^{-1} A \preceq A' \preceq (1 + \sqrt{\epsilon}) A$. Also, note that $\Range(A') = \Range(S) = \Range(A^\pseudo) = \Range(A)$ (the middle equality holds because $S$ is a two-sided approximation for $A^\pseudo$) and therefore $A' \y = \bv$. Define $\x = A^\pseudo \bv$, and note that this gives $A \x = A'\y = \bv$. Consequently, letting $q_{A,\bv}(\z) = 2\bv^\top \z - \z^\top A \z$ for all $\z \in \R^n$ we see that the preconditions for Lemmas \ref{lemma:qfacts} and \ref{lemma:approxfact} hold, yielding
\[
0 \leq q_{A,\bv}(\x) - q_{A,\bv}(\y) 
=  (\x - \y)^\top  A (\x - \y) \leq 2 \epsilon \cdot q_{A, \bv}(\x) ~.
\] 
However, since $q_{A,\bv}(x) = \bv^\top A \bv$ by Lemma~\ref{lemma:qfacts} taking absolute values implies 
\begin{equation}
|\bv^\top A \bv - 2 \bv^\top \y + \y^\top A \y| =
|q_{A,\bv}(\x) - q_{A,\bv}(\y)| 
\leq 2 \epsilon \cdot q_{A, \bv}(\x)
\leq 2 \epsilon \cdot \bv^\top A \bv \label{eq:b1} ~.
\end{equation}
Now, by our assumption on $f$ 
we have $|f(\y) - \y^\top A \y| \leq \epsilon \y^\top A \y$ and by Lemma \ref{lemma:approxfact} with $A' = S^\pseudo$, we have $\y^\top A \y \leq 2 q_{A, \bv} (\x) = 2 \bv^\top A \bv$. Combining yields 
\begin{equation}
|\y^\top A \y - f(\y)| \leq 2 \epsilon \cdot \bv^\top A \bv \label{eq:b2}
\end{equation}
Combining \eqref{eq:b1} and \eqref{eq:b2} we get
\begin{align*}
|2\bv^\top \y - f(\y) -  \bv^\top A^\pseudo \bv|
&\leq |2\bv^\top \y - f(\y) - q_{A, \bv} (\y) | + |q_{A, \bv} (\y) - q_{A, \bv} (\x)| \\ 
&\leq  2 \epsilon \cdot \bv^\top A \bv + 2\epsilon \cdot \bv^\top A \bv = 4\epsilon \cdot \bv^\top A \bv. &\qedhere
\end{align*}
\end{proof}

With Theorem \ref{thm:general} in hand to prove Theorem~\ref{thm:pinv} we simply need to show that the sketch $f(\cdot)$ and the matrix $S$ can be computed and applied quickly and stored in low space. By Theorem~\ref{thm:main} proven in Section~\ref{sec:lap_sketch} we already know that we can achieve an $f$ with the requisite properties. In the following lemma we show that we can compute a satisfactory $S$ as well. This lemma is implicit in previous work on preconditioning, linear system solving, and Laplacian system solving, however we include a more detailed proof for completeness.


\begin{lemma}[Construction of Approximate Laplacian Solver Operator] \label{lemma:solver}
There is an algorithm, which when given a Laplacian $\glap \in \R^{n \times n}$ for a graph with  $m$ edges and an error tolerance $\epsilon \in (0,1)$, computes with probability at least $0.9$ in time $\otilde(m)$ an implicit representation of a symmetric matrix $S \in \R^{n \times n}$ such that
\begin{itemize}
\item $(1 + \sqrt{\epsilon})^{-1} \glap^\pseudo \preceq S \preceq (1 + \sqrt{\epsilon}) \glap^\pseudo$, 
\item the representation can be stored in $\otilde(n/\epsilon)$ space, and
\item for any vector $\x$, we can compute $S \x$ in $\otilde(n/\epsilon)$ time.
\end{itemize} 
\end{lemma}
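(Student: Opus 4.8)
The plan is to build $S$ as a spectral sparsifier of $\glap$ composed with a high-accuracy Laplacian solver, and to argue that the composition is symmetric and spectrally close to $\glap^\pseudo$. First I would invoke the nearly-linear-time spectral sparsification results (e.g.~\cite{SS,LeeS17}) to compute, in $\otilde(m)$ time and with probability at least $0.95$, a $(\gamma)$-spectral sparsifier $\apxlap$ of $\glap$ with $\otilde(n/\gamma^2)$ nonzero entries, for a suitably small constant-times-$\sqrt\epsilon$ value of $\gamma$; choosing $\gamma = \Theta(\sqrt\epsilon)$ gives a sparsifier with $\otilde(n/\epsilon)$ edges, which is where the claimed $\otilde(n/\epsilon)$ space comes from. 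Then I would run a nearly-linear-time Laplacian solver (\cite{ST,CohenKPPR14}) on $\apxlap$ to obtain a linear operator $Z$ that, for any $\bv \in \Range(\apxlap)$, returns $Z\bv$ with $\|Z\bv - \apxlap^\pseudo \bv\|_{\apxlap} \le \eta \|\apxlap^\pseudo \bv\|_{\apxlap}$ for an inverse-polynomially small $\eta$; such a solve costs $\otilde(\nnz(\apxlap)) = \otilde(n/\epsilon)$ time, so storing $\apxlap$ (the $\otilde(n/\epsilon)$ implicit representation of $S$) suffices both for the space bound and the per-query time bound.

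The main technical care is making $S$ genuinely symmetric and controlling the accuracy. Generic Laplacian solvers do not directly output a symmetric matrix, so I would symmetrize: set $S = \tfrac12(Z + Z^\top)$ restricted to $\Range(\glap)$, or, more cleanly, use that these solvers can be taken to be a fixed polynomial in $\apxlap$ (as in preconditioned Chebyshev/Richardson iteration), which is automatically symmetric and PSD on $\Range(\apxlap)$. Either way, a standard analysis shows that the resulting symmetric $S$ satisfies $(1+\eta)^{-1}\apxlap^\pseudo \preceq S \preceq (1+\eta)\apxlap^\pseudo$ on $\Range(\glap)$ for an $\eta$ we can drive below any fixed polynomial in $1/n$ with $O(\log n)$ extra iterations. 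Composing the two approximations multiplicatively: since $(1+\gamma)^{-1}\glap \preceq \apxlap \preceq (1+\gamma)\glap$ we get $(1+\gamma)^{-1}\glap^\pseudo \preceq \apxlap^\pseudo \preceq (1+\gamma)\glap^\pseudo$, and combining with the solver guarantee gives $(1+\gamma)^{-1}(1+\eta)^{-1}\glap^\pseudo \preceq S \preceq (1+\gamma)(1+\eta)\glap^\pseudo$; choosing $\gamma$ and $\eta$ so that $(1+\gamma)(1+\eta) \le 1 + \sqrt\epsilon$ (possible since $\gamma = c\sqrt\epsilon$ for small enough $c$ and $\eta = n^{-\Omega(1)}$) yields the claimed $(1+\sqrt\epsilon)^{\pm 1}$ sandwich.

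I expect the main obstacle to be bookkeeping rather than a deep idea: namely, converting the ``approximate solver'' guarantee, which is usually stated as an $A$-norm error bound on a single solve $Z\bv$, into a clean two-sided \emph{operator} inequality $(1+\sqrt\epsilon)^{-1}\glap^\pseudo \preceq S \preceq (1+\sqrt\epsilon)\glap^\pseudo$ with an honestly symmetric $S$, and making sure every quantity lives on $\Range(\glap) = \vones^\perp$ so the pseudoinverses behave. The cleanest route is to fix the solver to be preconditioned Richardson iteration with preconditioner $\apxlap$ run for $O(\log(n/\epsilon))$ steps, which makes $S$ literally a polynomial $p(\apxlap^\pseudo)$ — manifestly symmetric, manifestly supported on $\Range(\apxlap)=\Range(\glap)$, and with spectrum on the nonzero eigenvalues of $\apxlap$ provably within $1\pm\eta$ of $\apxlap^\pseudo$'s by a direct eigenvalue computation. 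The probability $0.9$ in the statement then comes from the (only) randomized ingredient, the sparsifier construction, with room to spare; everything else is deterministic, and the $\otilde(m)$ construction time and $\otilde(n/\epsilon)$ space and query time follow by adding up the costs of the sparsifier build and storing $\apxlap$.
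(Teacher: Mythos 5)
Your overall plan matches the paper's: sparsify $\glap$ to a $\Theta(\sqrt\epsilon)$-sparsifier $\glap'$ with $\otilde(n/\epsilon)$ edges, build a symmetric operator that tightly approximates $\glap'^\pseudo$, and compose the two Loewner sandwiches. The gap is in your concrete realization of the second step. Your ``cleanest route'' makes $S$ ``literally a polynomial $p(\apxlap^\pseudo)$'' obtained by preconditioned Richardson with preconditioner $\apxlap$ itself --- but applying $\apxlap^\pseudo$ even once is exactly the problem you are trying to solve, so this is circular; and the alternative reading, a low-degree polynomial in $\apxlap$ (unpreconditioned Richardson/Chebyshev), cannot reach accuracy $1\pm\eta$ in $O(\log(n/\epsilon))$ steps because the condition number of a polynomially-weighted Laplacian is only $\mathrm{poly}(n)$, forcing degree $\mathrm{poly}(n)$. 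The paper resolves this by introducing a \emph{third} object: a symmetric operator $N$ with the constant-factor guarantee $\tfrac12 N^\pseudo \preceq \glap' \preceq 2N^\pseudo$, supplied by the Laplacian-solver literature (the paper points to \cite{PengS14,KyngS16} precisely because those solvers are symmetric operators), and then sets $S = \tfrac12 N\sum_{k=0}^{z}(I - \glap' N/2)^k$ with $z = O(\log(1/\sqrt\epsilon))$. Every factor here is applicable in $\otilde(n/\epsilon)$ time, $S$ is manifestly symmetric since it equals $N^{1/2}q(N^{1/2}\glap' N^{1/2})N^{1/2}$ for a polynomial $q$, and a direct eigenvalue argument on $N^{1/2}\glap' N^{1/2}$ gives $(1-\sqrt\epsilon/4)\glap'^\pseudo \preceq S \preceq \glap'^\pseudo$.

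Your other suggested fallback --- symmetrizing a black-box solver via $\tfrac12(Z+Z^\top)$ --- can be made to work (the $\apxlap$-norm error bound does transfer to a two-sided quadratic-form bound on $\Range(\apxlap)$ by Cauchy--Schwarz), but it still requires knowing that $Z$ is a genuine linear operator storable in $\otilde(n/\epsilon)$ space whose transpose you can also apply, which is again a statement about the internal structure of the solver rather than its standard input/output guarantee. Either way, the missing ingredient you should name explicitly is the constant-accuracy symmetric preconditioner that the boosting polynomial is built around; without it the accuracy-boosting step has nothing to iterate with.
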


Again, note that there are many ways to prove this lemma and it follows from a general understanding of the literature on Laplacian system solving. Here we provide a particular general approach towards the lemma that allows us to use sparsification and Laplacian system solvers fairly generally.

\begin{proof}[Proof of Lemma~\ref{lemma:solver}]
We construct $S$ as follows. First, we compute a Laplacian matrix $\glap'$ which is a $\sqrt{\epsilon}/4$-spectral sparsifier of $\glap$. With $\glap'$, we compute a symmetric linear operator $N$ satisfying $\frac{1}{2} N^\pseudo \preceq \glap' \preceq 2 N^\pseudo$. Finally, for $z = \lceil 4 \log(16/\sqrt{\epsilon}) \rceil$ we let
\[
S = \frac{1}{2} N \sum_{k = 0}^{z} \Big(I - \frac{\glap' N}{2}\Big)^k .
\]
We claim that this $S$ has all of the desired properties. 

We begin by showing that $S$ spectrally approximates $\glap^\pseudo$. This follows from general analysis of preconditioning and our particular proof is taken from components of \cite{LeeS15}. First, we see that since $\glap'$ spectrally approximates $\glap$, if we can instead show that $S$ sufficiently closely approximates $\glap'^\pseudo$, we can show that it also approximates $\glap$.  In this light, we will first show $(1 - \sqrt{\epsilon}/4) \glap'^\pseudo  \preceq S \preceq \glap'^\pseudo.$

We first note that we can write  $S$ equivalently as 
\[
S = \frac{1}{2} N^{1/2} \sum_{k = 0}^{z} \Big(I - \frac{1}{2} N^{1/2} \glap' N^{1/2} \Big)^k N^{1/2} ~.
\]
Let $\x$ be any vector, and let $\y = N^{1/2} \x$. By the definition of $N$, we have $\y \in \Range(N^{1/2}) = \Range(N) = \Range(\glap')$. Further note that for any vector $\w \in \Range(\glap')$, by $N$'s spectral guarantee we have $\frac{1}{4} \w^\top \w \preceq  \frac{1}{2} \w^\top N^{1/2} \glap' N^{1/2} \w   \preceq  \w^\top \w$. Now,
\begin{gather*}
\frac{1}{2} \x^\top N^{1/2} \sum_{k = 0}^{\infty} \Big(I - \frac{N^{1/2} \glap' N^{1/2}}{2}\Big)^k N^{1/2} \x = \frac{1}{2} \y^\top \sum_{k = 0}^{\infty} \Big(I - \frac{N^{1/2} \glap' N^{1/2}}{2}\Big)^k \y \\
= \frac{1}{2} \x^\top N^{1/2} \Big(\frac{N^{1/2} \glap' N^{1/2}}{2}\Big)^\pseudo N^{1/2} \x  = \x^\top\glap'^\pseudo \x.
\end{gather*}
Therefore, 
\[
\frac{1}{2} N^{1/2} \sum_{k = 0}^{\infty} \Big(I - \frac{N^{1/2} \glap' N^{1/2}}{2}\Big)^k N^{1/2} = \glap'^\pseudo.
\]
With this, we have
\[
\glap'^\pseudo - S = \frac{1}{2} N^{1/2} \sum_{k = z+1}^{\infty} \Big(I - \frac{N^{1/2} \glap' N^{1/2}}{2}\Big)^k N^{1/2} \succeq 0.
\]
We now prove the lower bound on $S$. By the same idea as before, let $\x$ be any vector, and let $\y = N^{1/2} \x$. Note that $\y \in \Range(\glap')$, and further for any vector $\w \in  \Range(\glap')$ \[
\w^\top \Big(I - \frac{N^{1/2} \glap' N^{1/2}}{2}\Big) \w \leq \frac{3}{4} \w^\top \w
\] by the definition of $N$. With this, it is not hard to see that for any integer $k$ we have 
\[\w^\top \Big(I - \frac{N^{1/2} \glap' N^{1/2}}{2}\Big)^k \w \leq \left(\frac{3}{4}\right)^k \w^\top \w
\]
 as well and so
\[
\x^\top (\glap'^\pseudo - S) \x = \frac{1}{2} \y^\top \Big( \sum_{k = z+1}^{\infty} \Big(I - \frac{N^{1/2} \glap' N^{1/2}}{2}\Big)^k \Big) \y 
\leq \frac{1}{2} \sum_{k = z+1}^{\infty} \Big(\frac{3}{4}\Big)^k (\y^\top \y) \leq \frac{\sqrt{\epsilon}}{8} (\x^\top N \x)
\]
by our choice of $z$. With this we have
\[
\glap'^\pseudo - S \preceq \frac{\sqrt{\epsilon}}{8} N \preceq  \frac{\sqrt{\epsilon}}{4} \glap'^\pseudo.
\]
and by rearranging we get $(1 - \frac{\sqrt{\epsilon}}{4}) \glap'^\pseudo  \preceq S \preceq \glap'^\pseudo.$ By combining this with the sparsification guarantee of $\glap'$ and the fact that $(1+x) (1+ 4 x)^{-1} \leq (1-x)$ for $x \in [0,1/2)$ we finally get 
\[
\left(1+ \sqrt{\epsilon}\right)^{-1} \glap^\pseudo \preceq 
\left(1-\frac{\sqrt{\epsilon}}{4}\right)
\left(1+\frac{\sqrt{\epsilon}}{4}\right)^{-1} \glap^\pseudo \preceq (1-\frac{\sqrt{\epsilon}}{4}) \glap'^\pseudo \preceq S
\]
and
\[
S \preceq \glap'^\pseudo \preceq \left(1-\frac{\sqrt{\epsilon}}{4}\right)^{-1} \glap^\pseudo \preceq \left(1+ \sqrt{\epsilon}\right) \glap^\pseudo.
\]

We now show that $S$ constructed in the above way can implicitly found quickly and stored in the desired space. Since S only uses the matrix $\glap'$ and the linear operator $N$ as inputs in our definition, we only need to show that these can be found in the claimed time and space, and that these can be computed with the desired success probabilities. By the standard machinery of graph sparsification, we are able to compute $\glap'$ in $\otilde(m)$ time which is an $\sqrt{\epsilon}$-spectral sparisifier of $\glap$ with probability $0.95$ and store it in $\otilde(n/\epsilon)$ space \cite{ST, SS, ZhuLO15, LeeS17}. Further, by the extensive literature on Laplacian linear system solvers \cite{ST,KoutisMP10,KoutisMP11,KelnerOSZ13,LeeS13,PengS14,CohenKPPR14,KyngS16}, we can construct the linear operator $N$ in $\otilde(n/\epsilon)$ time, store some representation of it in $\otilde(n/\epsilon)$ space, and apply this representation to a vector in $\otilde(n/\epsilon)$ time such that $N$ satisfies its spectral guarantee with probability $0.95$ (see \cite{PengS14} or \cite{KyngS16} for solvers that clearly meet the symmetry constraints, though other algorithms can be modified for this purpose as well). Therefore the described implicit form of $S$ can be computed and stored in the claimed time and space, and $S$ satisfies its spectral guarantee with probability $0.9$ by the union bound.

Finally, we need to show that we can apply $S$ to a vector in $\otilde(n/\epsilon)$ time. We first note that both $\glap'$ and $N$ can be applied to vectors in $\otilde(n/\epsilon)$ time. Since $S$ is $N$ times a matrix polynomial in $\glap' N$ of degree $z = \otilde(1)$, it follows that $S$ can be applied to a vector in $\otilde(n/\epsilon)$ time as well. Thus $S$ satisfies all of our desired properties.
\end{proof}

With Lemma~\ref{lemma:solver} with Theorem~\ref{thm:general} we now have everything we need to prove Theorem~\ref{thm:pinv}. 

\begin{proof}[Proof of Theorem~\ref{thm:pinv}]
Consider Algorithms \ref{alg:psketch} and \ref{alg:peval} below. We claim that these have the properties we desire in Theorem~ \ref{thm:pinv}. 

We begin with the error guarantee. Assume we are given a vector $\bv$ and wish to approximate $\bv^\top \glap^\pseudo \bv$. Note that our sketch computes $S$, a linear operator as in Lemma \ref{lemma:solver}, and then computes $\y = S \bv$. Once we have this, we take our $\epsilon$-spectral sketch $f$ computed as in Theorem \ref{thm:main}, and then compute $2\bv^\top \y - f(\y)$. Our goal is to show that this quantity is approximately equal to $\bv^\top \glap^\pseudo \bv$. By the guarantee from Lemma \ref{lemma:solver}, we have $(1+\sqrt{\epsilon})^{-1} \glap^\pseudo \preceq S \preceq (1+\sqrt{\epsilon}) \glap^\pseudo$ with probability $0.9$. Further by the guarantee from \ref{thm:main}, we have $(1 - \epsilon) \y^\top \glap \y \preceq f(\y) \preceq (1 + \epsilon) \y^\top \glap \y$ with probability $0.9$ as well. Thus, by the union bound, both $S$ and $f(\y)$ satisfy the requisite inequalities with probability $0.8$. If we define $\x$ to be such that $\glap \x = \bv$, by Theorem \ref{thm:general} (with $A = \glap$) , we have
\[
|2\bv^\top \y - f(\y) -  \bv^\top \glap^\pseudo \bv| \leq 4\epsilon q_{\glap, \bv} (\x) = 4 \epsilon \bv^\top \glap^\pseudo \bv.
\]
Thus, by appropriately scaling $\epsilon$, we obtain our guarantee. 

We now show the space bound and time complexity of constructing the sketch. Our sketch requires computing and storing two things: a representation of $S$ constructed via Lemma \ref{lemma:solver} and an $\epsilon$-spectral sketch for $\glap$ constructed by Theorem \ref{thm:main}. Now, by the space and construction bounds for each of these, computing our sketch clearly takes $\otilde(m)$ time and storing it costs $\otilde(n/\epsilon)$ space. 

Finally, we prove the running time per query. When evaluating our sketch, we must first apply $S$ to a vector. This can be done in $\otilde(n/\epsilon)$ by Lemma \ref{lemma:solver}. Once we have this vector, we must then compute a query of $f$ (which costs $\otilde(n/\epsilon)$ time by Theorem \ref{thm:main}) and then do $O(n)$ of additional work to compute a dot product. Thus, in total querying our sketch takes $\otilde(n/\epsilon)$ time, and all of our claims are proven.
\end{proof}

To conclude, we note that as in the previous section we can boost the constant probability of success of our algorithm to $1 - \frac{1}{\poly(n)}$ by storing a logarithmic number of copies of our sketch and taking the median  of the values obtained by querying each copy as our estimate.

\begin{algorithm}[ht]
	\caption{$\texttt{PseudoinverseSketch}$($G$, $\epsilon$)} \label{alg:psketch}
	\KwIn{A polynomially-bounded weighted graph $G = (V,E)$ with $n$ nodes; a tolerance parameter $\epsilon$}
	\KwOut{\textsf{data}($G$) data for a sketch of $\glap^\pseudo$}
	$\textsf{data}(G) \gets \emptyset$\;
	$S \gets$ a linear operator which acts as a pseudoinverse of some $\sqrt{\epsilon}$-spectral sparsifier for $\glap$ computed via Lemma \ref{lemma:solver} \;
	$f \gets$ an $\epsilon$-spectral sketch for $\glap$ computed via Theorem 1\;
	Add $f$ and $S$ to $\textsf{data}(G)$\;
	\Return{\textsf{data}($G$)}
\end{algorithm}

\begin{algorithm}[ht]
	\caption{$\texttt{PseudoinverseEval}$(\textsf{data}($G$),$\x$,$\epsilon$)} 
	\label{alg:peval}
	\KwIn{\textsf{data}($G$) data for a sketch of $\glap^\pseudo$; a query vector $\bv$}
	\KwOut{an approximation to $\bv^\top \glap^\pseudo  \bv$}
	$\y \gets S \bv $\; 
	\Return{$2\bv^\top \y - f(\y)$}
\end{algorithm}

\section{All Pairs Effective Resistances with Pseudoinverse Sparsifiers}
\label{sec:all_pairs}

In Section~\ref{sec:pseudo} we provided an algorithm that allows us to sketch the pseudoinverse of a Laplacian with a constant probability of failure for each vector in nearly linear time and using $\otilde(n/\epsilon)$ bits of data from the Laplacian. However, the query time for the function constructed is $\otilde(n/\epsilon)$: each time we run our algorithm we need to perform an approximate Laplacian system solve against a sparsified Laplacian. Although this runtime may be acceptable should we want to check a small number of vectors, if we wanted to compute effective resistances between every pair of nodes in a graph this would take $\otilde(n^3/\epsilon)$ work in total, which is slower than exactly computing the effective resistances via matrix multiplication. 

In this section, we demonstrate a way to preprocess our pseudoinverse sparsifier in $\otilde(n^2/\epsilon)$ time such that the resulting object can answer pseudoinverse queries on vectors with $k$ nonzero entries in $O(k^2)$ time. Since approximating effective resistances is equivalent to querying this data structure with vectors with two nonzero entries, we can thus use this structure to answer single effective resistance queries in constant time. This allows us, with the medians trick from previous sections, to estimate all pairs effective resistances of a graph in $\otilde(n^2/\epsilon)$, and ensure all of them are accurate within $1 \pm \epsilon$ with high probability. 

Recall that our pseudoinverse sparsifier was constructed in the following way: we stored a linear operator $S$ and our $\epsilon$-spectral sketch $f$ from Section \ref{sec:lap_sketch}. Whenever we received a query $\bv$, we found $\x = S \bv$, and we then computed $g(\x) = 2\bv^\top \x - f(\x)$. This was shown to be a $1 \pm \epsilon$ approximator for $\bv^\top \glap^\pseudo \bv$ with constant probability for any $\bv$. Our approach is as follows: we will take $f(\x)$ and convert it into an explicit matrix $M$ where $f(\x) = \x^\top M \x$, and then we will leverage the implicit sparsity of $S$ to efficiently compute matrix-matrix products with it. We will use this to build a matrix $C$, and we will prove that the quadratic forms of $C$ are approximately those of $\glap'^{\pseudo}$. 

We begin by showing the matrix $M$ described above exists, and can be found efficiently.

\begin{lemma}
\label{lemma:matrixize}
Let $\glap$ be a graph Laplacian for a polynomially-bounded weighted graph, and let $f$ be an $\epsilon$-spectral sketch for $\glap$ constructed as in Section \ref{sec:lap_sketch}. Then, there exists a matrix $M$ where $f(\x) = \x^\top M \x$ for all $\x$, and we can find $M$ in $\otilde(n^2)$ time. 
\end{lemma}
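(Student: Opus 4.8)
The plan is to observe that, for every fixed outcome of its internal randomness, the sketch $f$ built in Section~\ref{sec:lap_sketch} is literally a homogeneous quadratic form in the query vector, and then to bound the cost of writing down its matrix. First I would unpack the definition. By Algorithms~\ref{alg:sketch} and \ref{alg:eval}, $f(\x) = \sum_{i,j,k} 2^{i-1}\,\texttt{ExpanderEval}(\dat(H_{ijk}),\x_{H_{ijk}})$, where the sum ranges over the $O(\log^2 n)$ graphs $G_{ij}$ produced by \texttt{Split} and their connected components $H_{ijk}$, and $\x_{H_{ijk}} = R_{ijk}\x$ for the $0/1$ restriction matrix $R_{ijk}\in\{0,1\}^{n_{ijk}\times n}$ selecting the coordinates in $V(H_{ijk})$, where $n_{ijk}:=|V(H_{ijk})|$. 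Since for each fixed $(i,j)$ the components $H_{ijk}$ partition $V$, we have $\sum_k n_{ijk} = n$.

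Next I would show each summand is a quadratic form in $\x_{H_{ijk}}$. Writing $H=H_{ijk}$ and inspecting \texttt{ExpanderEval}: it computes $c = (\vones^\top\hdiag\x_H)/(\vones^\top\hdiag\vones)$ and $\y = \x_H - c\vones = P_H\x_H$, where $\hdiag$ is the stored degree matrix, $d_H:=\hdiag\vones$, $D:=\vones^\top\hdiag\vones$, and $P_H := I - D^{-1}\vones d_H^\top$ is a fixed rank-one perturbation of the identity; it then returns $\sum_u \delta_u \y_u^2 - \sum \y_u\y_v(\cdots)$, a sum of terms each of which is a product of exactly two coordinates of $\y$, with no linear or constant part. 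Hence the returned value equals $\y^\top W_H \y$ where $W_H := \hdiag - \tilde{\hadj}$ is the "sketched Laplacian'' of $H$ ($\tilde{\hadj}$ the adjacency matrix of the stored/sampled edges, weighted by $\delta^L_u/\alpha$ on the sampled ones), so $\texttt{ExpanderEval}(\dat(H),\x_H) = \x_H^\top M_H \x_H$ with $M_H := P_H^\top W_H P_H$, which we may replace by its symmetrization. Therefore $f(\x) = \x^\top M\x$ with $M := \sum_{i,j,k} 2^{i-1}\, R_{ijk}^\top M_{H_{ijk}} R_{ijk}$.

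Finally I would account for the running time of forming $M$ explicitly. Fix a component $H$ with $n_H$ vertices. After symmetrizing $W_H$ and collapsing repeated sampled edges, $W_H$ has at most $\sum_u \min(\alpha,\delta_u) \le \sum_u \delta_u \le n_H^2$ nonzeros. Using the rank-one form of $P_H$, with $p := W_H\vones$ and $s := \vones^\top W_H\vones$, we get $M_H = W_H - D^{-1}(p\, d_H^\top + d_H\, p^\top) + sD^{-2}\, d_H d_H^\top$; so, after $O(\nnz(W_H)+n_H)=O(n_H^2)$ work to compute $p$ and $s$, every entry of this $n_H\times n_H$ matrix can be written down in $O(1)$ time, and inserting $2^{i-1}M_H$ into the $V(H)\times V(H)$ block of $M$ costs another $O(n_H^2)$. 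Summing over components, for each fixed $(i,j)$ we have $\sum_k n_{ijk}^2 \le (\sum_k n_{ijk})^2 = n^2$, and there are only $O(\log^2 n)$ pairs $(i,j)$; adding the $O(n^2)$ to allocate and zero $M$, the total is $O(n^2\log^2 n)=\otilde(n^2)$ (and building $\dat(H_{ijk})$ in the first place is $\otilde(m)\subseteq\otilde(n^2)$).

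I expect the main obstacle to be bookkeeping rather than mathematics: carefully verifying that \texttt{ExpanderEval} is genuinely homogeneous of degree two in $\x_H$ (so that $M_H$ exists with no affine correction), that $W_H$ remains $O(n_H^2)$-sparse after collapsing duplicate samples, and that the per-block cost $O(n_H^2)$ sums to $\otilde(n^2)$ precisely because the components of each $G_{ij}$ partition $V$ and there are only polylogarithmically many levels. Exploiting the rank-one structure of $P_H$ is what keeps the cost of forming each (generally dense) $M_H$ at $O(n_H^2)$ rather than $O(n_H^3)$.
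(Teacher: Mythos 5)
Your proposal is correct and follows essentially the same route as the paper's proof: decompose $f$ into the per-component \texttt{ExpanderEval} terms, observe each is a quadratic form $\y^\top W_H \y$ with $\y$ obtained from $\x_H$ by a rank-one-plus-identity centering map, conjugate to get $M_H$ in $O(n_H^2)$ time per block, and sum using $\sum_k n_{ijk}^2 \le n^2$ over the $O(\log^2 n)$ levels. Your version is in fact more explicit than the paper's (the closed-form rank-one expansion of $P_H^\top W_H P_H$), but it is the same argument.
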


\begin{proof}
 Recall how $f$ was constructed. It was built by converting our input graph into a weighted combination of unweighted graphs $G_i$, and then applying expander partitioning to each $G_i$ separately to form $G_{ij}$s. We then sparsified the connected components of the $G_{ij}s$ separately. Indeed, we have the formula
\[
f(\x) = \sum_{i=1}^{\lfloor{\log M}\rfloor + 1} 2^{i-1}\sum_{j}\sum_k f_{ijk}([\x]_{H_{ijk}}),
\]
where $H_{ijk}$ is the set of nodes in the $k^{th}$ connected component in $G_{ij}$ and $[\x]_S$ is the vector $\x$ induced on the elements in the set $S$. Each $f_{ijk}$ was constructed in turn by scaling each corresponding component of the query vector by a multiple of the ones vector to form a vector $\y$, and then computing
\[
 \sum_{u \in H_{ijk}} y_u^2 \delta_u - \sum_{u \in S} \sum_{v \in V(H_{ijk})} y_u y_v -  \sum_{u \in L} \sum_{v \in S} y_u y_v - \sum_{u \in L} \frac{\delta^L_u}{\alpha} \sum_{v \in L} y_u y_v Y^u_v,
\]
where $\y$ is our scaled query and $\delta_i, \delta_u^L, \alpha, Y_v^u, S$, and $L$ are all randomly generated but fixed information dependent on $i$, $j$, and $k$. The four sums collectively sum $\otilde(|H_{ijk}|/\epsilon)$ different products $y_i y_j$: the first sum adds $|H_{ijk}|$ terms, and the last three sum one term for each of the $\otilde(|H_{ijk}|/\epsilon)$ edges our sketch stores. Thus, this equation represents a quadratic form $\y^\top L_{H_{ijk}}\y$, for some matrix $L_{H_{ijk}}$, and this matrix has $\otilde(|H_{ijk}|/\epsilon)$ nonzeroes. Now, the scaling by the ones vector is obviously a linear operator, and thus we can write $\y^\top L_{H_{ijk}}\y = [\x]_{H_{ijk}}^\top O_{ijk} L_{H_{ijk}} O_{ijk} [\x]_{H_{ijk}} = [\x]_{H_{ijk}}^\top M_{ijk} [\x]_{H_{ijk}}$ for some matrix $M_{ijk}$. Note that since $O_{ijk}$ simply removes a certain multiple of the ones vector from a query vector, it is a rank-one matrix plus the identity. Thus, we can compute its image upon a matrix in $\otilde(|H_{ijk}|^2)$ time, and so we can compute $M_{ijk}$ in $\otilde(|H_{ijk}|^2)$ time as well. As the connected components have no information crossing over from each other, we can combine these forms to get a larger quadratic form $\x^\top M_{ij} \x$. Therefore, we can conclude $f_{ij}$ has a corresponding matrix expression as
\[
f_{ij}(\x) = \x^\top M_{ij} \x,
\]
and that we can find $M_{ij}$ in $\otilde(n^2)$ time. Thus,
\[
f(\x) = \sum_{i=1}^{\lfloor{\log m}\rfloor + 1} 2^{i-1}\sum_{j} \x^\top M_{ij} \x = \x^\top \left(\sum_{i=1}^{\lfloor{\log m}\rfloor + 1} 2^{i-1}\sum_{j}  M_{ij}\right) \x = \x^\top M \x. 
\]
\end{proof}

In light of the above lemma, we can finally prove Theorem \ref{thm:aper}. We restate it for clarity.
\begin{algorithm}[t]
	\caption{$\texttt{AllPairsEffectiveResistances}$($G$, $\epsilon$)} \label{alg:aper}
	\KwIn{A polynomially-bounded weighted graph $G = (V,E)$ with $n$ nodes; a tolerance parameter $\epsilon$}
	\KwOut{\textsf{output}: $n^2$ values which estimate the effective resistance between every pair of nodes $(i,j)$. Any fixed pair is a $1 \pm \epsilon$ approximation with probability $0.8$. }
	\textsf{output} $\gets \emptyset$ \;
	$f \gets$ an $\epsilon$-spectral sketch for $\glap$ computed via Theorem \ref{thm:main} \;
	$M \gets$ matrix form of $f$ computed as in Lemma $\ref{lemma:matrixize}$ \;
	$S \gets$ a linear operator which acts as a pseudoinverse of some $\sqrt{\epsilon}$-spectral sparsifier for $\glap$ computed via Lemma \ref{lemma:solver} \;
	$Q \gets 2S^\top - S^\top MS$ \;
	\For {$i \in 1:n$}
		{\For {$j \in 1:n$}
			{ $\x \gets \textbf{0}$ \;
			$\x[i] = 1$, $\x[j] = -1$ \;
			Add $((i,j), \x^\top Q \x)$ to \textsf{output}\;
			}}

	\Return{\textsf{output}}
\end{algorithm}

\aper*

\begin{proof}
Consider Algorithm \ref{alg:aper} above. Given input $\glap$, first computes a matrix form $M$ of our sketch from Theorem \ref{thm:main}, and it then computes a linear operator $S$ by Lemma \ref{lemma:solver}. With these, it exactly computes $2S^\top - S^\top M S$ and returns it. We can generate implicit representations of both $M$ and $S$ in $\otilde(m)$ time, and since $M$ can be computed in $\otilde(n^2)$ time the cost of finding it explicitly can be rolled into the evaluation time. Note that if $\x' = S \bv$ then
\[
\bv^\top (2S^\top - S^\top M S) \bv = 2 \bv^\top \x' - \x'^\top M \x' = 2\bv^\top \x' - f(\x') ~,
\] and by the previous section this is a $(1 + \epsilon)$-multiplicative approximation of $\bv^\top \glap^\pseudo \bv$ with constant probability. Therefore $2S^\top - S^\top M S$ preserves each quadratic form of $\glap^\pseudo$ with constant probability. 

Note that $S$ can be applied to a vector in $\otilde(n/\epsilon)$ time by Lemma \ref{lemma:solver}, so both $MS$ and $S$ itself can be computed in $\otilde(n^2/\epsilon)$ time by computing the $n$ matrix-vector products of the rows of the corresponding matrix. Therefore, $S^\top M S$ can also be computed in $\otilde(n^2/\epsilon)$ time, and so we can obtain $2S^\top - S^\top M S$ exactly in $\otilde(n^2/\epsilon)$ time. Computing each of the quadratic forms corresponding to effective resistances can be done in $O(n^2)$ additional time, and so we can compute approximations to the all-pairs effective resistances in $G$ (which are each accurate with constant probability) in $\otilde(n^2/\epsilon)$ total time. 

By storing $\otilde(1)$ of these sketches, computing approximations to all of the pairwise effective resistances with each, and then returning the median value computed for each, we can boost the probability of success for every single pair to $1 - \frac{1}{poly(n)}$. Taking the union bound over all $O(n^2)$ pairs implies our result. 
\end{proof}

\bibliography{main} 
\bibliographystyle{ieeetr}

\end{document}